\documentclass[10pt]{llncs} 
\usepackage[latin1]{inputenc}
\usepackage{latexsym}
\usepackage{amssymb}
\usepackage{amsmath}
\usepackage{xspace}
\usepackage{gastex}

\newtheorem{fact}{Fact}
\newcommand{\eat}[1]{}

\renewcommand{\epsilon}{\varepsilon}
\renewcommand{\phi}{\varphi}

\newcommand{\defin}[1]{\textbf{#1}}
\newcommand{\tuple}[1]{\langle{#1}\rangle}
\newcommand{\distribution}[1]{\mathcal{D}(#1)}

\newcommand{\arena}{\mathcal{A}}

\newcommand{\game}{\mathbb{G}}
\newcommand{\Eve}{Eve\xspace}
\newcommand{\Adam}{Adam\xspace}
\newcommand{\Evei}{E}
\newcommand{\Adami}{A}
\newcommand{\states}{S}
\newcommand{\actions}{\Sigma}
\newcommand{\action}{\sigma}
\newcommand{\ftrans}{\delta}

\newcommand{\requiv}{\sim}

\newcommand{\plays}[1]{Plays(#1)}

\newcommand{\outcomes}[3]{Outcomes(#1,#2,#3)}
\newcommand{\outcomesunetdemi}[2]{Outcomes(#1,#2)}
\newcommand{\mesure}[3]{\mu_{#1}^{#2,#3}}
\newcommand{\cone}[1]{cone(#1)}
\newcommand{\cones}{Cones}
\newcommand{\field}{\mathcal{F}}
\newcommand{\proba}[3]{\mathrm{Pr}_{#1}^{#2,#3}}

\newcommand{\objective}{\mathcal{O}}

\newcommand{\fstates}{F}
\newcommand{\updateK}[1]{\mathrm{UpKnow}_{#1}}
\newcommand{\knowledge}[1]{\mathrm{Knowledge}_{#1}}
\newcommand{\Ki}{K}
\newcommand{\wknowledge}[1]{\mathcal{K}^{\mathrm{AS}}_{#1}}
\newcommand{\allow}[1]{\mathrm{Allow(#1)}}
\newcommand{\AS}{\mathrm{AS}}
\newcommand{\rank}[1]{\mathrm{Rank_{#1}}}
\newcommand{\post}[3]{\mathrm{Post_{#1,#2}(#3)}}
\newcommand{\ranketoile}{\mathrm{Rank}^*}

\newcommand{\wstates}{\states^{\mathrm{AS}}}
\newcommand{\poswstates}{\states^{\mathrm{>0}}}
\newcommand{\finpreuve}[1]{\null\hfill$\qed_{[#1]}$}
\newcommand{\onehalf}{$1\frac{1}{2}$}



\title{Qualitative Concurrent Stochastic Games with Imperfect Information\thanks{Supported by the \textsc{anr} project \textsc{jade} and by the \textsc{esf} project \textsc{gasics}.}}
\author{Vincent Gripon \and Olivier Serre}
\institute{LIAFA (CNRS \& Universit\'e Paris Diderot -- Paris 7)}

\begin{document}
\maketitle

\begin{abstract}
We study a model of games that combines concurrency, imperfect information and stochastic aspects. Those are finite states games in which, at each round, the two players choose, \emph{simultaneously} and \emph{independently}, an action. Then a successor state is chosen accordingly to some fixed probability distribution depending on the previous state and on the pair of actions chosen by the players. Imperfect information is modeled as follows: both players have an equivalence relation over states and, instead of observing the exact state, they only know to which equivalence class it belongs. Therefore, if two partial plays are indistinguishable by some player, he should behave the same in both of them. We consider reachability (does the play eventually visit a final state?) and B\"uchi objective (does the play visit infinitely often a final state?). 

Our main contribution is to prove that the following problem is complete for $2$-\textsc{ExpTime}: decide whether the first player has a strategy that ensures her to almost-surely win against \emph{any} possible strategy of her oponent. We also characterise those strategies needed by the first player to almost-surely win.
\end{abstract}

\section{Introduction}

Perfect information turn based two-player games on a graph \cite{dag01} are widely studied in computer science. Indeed, they are a useful tool for both theoretical (for instance the modern proofs of Rabin's complementation lemma rely on the memoryless determinacy of parity games \cite{GH82}) and more practical applications. On the practical side, a major application of games is for the verification of reactive open systems. Those are systems composed of both a program and some (possibly hostile) environment. The verification problem consists of deciding whether the program can be restricted so that the system meets some given specification whatever the environment does. Here, restricting the system means synthesizing some controller, which, in term of games, is equivalent to designing a winning strategy for the player modeling the program \cite{RW87}.

The perfect information turn-based model, even if it suffices in many situations, is somewhat weak for the following two reasons. First, it does not permit to capture the behavior of real  concurrent models where, in each step, the program and its environment \emph{independently} choose moves, whose \emph{parallel} execution determines the next state of the system. Second, in this model both players have, at each time, a perfect information on the current state of the play: this, for instance, forbids to model a system where the program and the environment share some public variables while having also their own private variables \cite{Reif84}.

In this paper, we remove those two restrictions by considering concurrent stochastic games with imperfect information. Those are finite states games in which, at each round, the two players choose \emph{simultaneously} and \emph{independently} an action. Then a successor state is chosen accordingly to some fixed probability distribution depending on the previous state and on the pair of actions chosen by the players. Imperfect information is modeled as follows: both players have an equivalence relation over states and, instead of observing the exact state, they only see to which equivalence class it belongs. Therefore, if two partial plays are indistinguishable by some player, he should behave the same in both of them. Note that this model naturally captures several model studied in the literature \cite{dAH00,CH05,chatterjeePHD,CDHR07}. The winning conditions we consider here are reachability (is there a final state eventually visited?), B\"uchi (is there a final state that is visited infinitely often?) and their dual versions, safety and co-B\"uchi.

We study qualitative properties of those games (note that quantitative properties --- \emph{e.g.} deciding whether the value of the game is above a given threshold --- are already undecidable in much weaker models \cite{Paz71}). More precisely, we investigate the question of deciding whether some player can almost-surely win, that is whether he has a strategy that wins with probability $1$ against any counter strategy of the oponent. 
Our main contributions is to prove that, for both reachability and B\"uchi objectives, one can decide, in doubly exponential time (which is proved to be optimal), whether the first player has an almost-surely winning strategy. Moreover, when it is the case, we are also able to construct such a finite-memory strategy. 
We also provide intermediate new results concerning positive winning in safety (and co-B\"uchi) \onehalf-player games (\emph{a.k.a} partial observation Markov decision process).
\smallskip

\noindent {\bf Related work.} Concurrent games with perfect information have been deeply investigated in the last decade \cite{HdAK97,dAH00,chatterjeePHD}. Games with imperfect information have been considered for turn-based model \cite{Reif84} as well as for concurrent models with only one imperfectly informed player \cite{CH05,CDHR07}. 
To our knowledge, the present paper provides the first positive results on a model of games that combines concurrency, imperfect information (on both sides) and stochastic transition function. 
In a recent independent work \cite{PrivateGimbert}, Bertrand, Genest and Gimbert obtain similar results than the one presented here for a closely related model. 
Bertand \emph{et al.} also discuss qualitative determinacy results and consider the case where a player is more informed than the other. We refer the reader to \cite{PrivateGimbert} for a detailed exposition.

\noindent{\bf Comparison with our previous work \cite{GS09}.} Krishnendu Chatterjee and Laurent Doyen pointed out an important mistake in a previous version of this work originally published in ICALP'09 \cite{GS09}. Indeed, in this previous work we additionally assumed that the players did not observe the actions they played and we proposed a variation of the constructions for this richer setting. It turned out that the constructions were wrong. We refer to \cite{CD11} for more insight on this point as well as for an answer to this more general problem.

\section{Definitions}

A \defin{probability distribution} over a finite set $X$ is a mapping $d:X\rightarrow [0,1]$ such that $\displaystyle\sum_{x\in X}d(x)=1$. In the sequel we denote by $\distribution{X}$ the set of probability distributions over $X$.

Given some set $X$ and some equivalence relation $\sim$ over $X$, $[x]_{\sim}$ stands for the equivalence class of $x$ for $\sim$ and $X/_{\sim}=\{[x]_{\sim}\mid x\in X\}$ denotes the set of equivalence classes of $\sim$.

For some finite alphabet $A$, $A^*$ (\emph{resp.} $A^\omega$) designates the set of finite (\emph{resp.} infinite) words over $A$.

\subsection{Arenas}

A \defin{concurrent arena with imperfect information} is a tuple $\arena = \langle\states,\actions_\Evei,\actions_\Adami,$\\$\ftrans,\requiv_\Evei,\requiv_\Adami\rangle$
where 
\begin{itemize}
	\item $\states$ is a finite set of control states; 
	\item $\actions_\Evei$ (\emph{resp.} $\actions_\Adami$) is the (finite) set of actions for \Eve (\emph{resp.} \Adam);
	\item $\ftrans: \states\times\actions_\Evei\times \actions_\Adami \rightarrow \distribution{\states}$ is the transition (total) function;
	\item $\requiv_\Evei$ and $\requiv_\Adami$ are two equivalence relations over states.
\end{itemize}
A play in a such an arena proceeds as follows. First it starts in some initial state $s$. Then \Eve picks an action $\action_\Evei\in\actions_\Evei$ and, \emph{simultaneously} and \emph{independently}, \Adam chooses an action $\action_\Adami\in\actions_\Adami$. Then a successor state is chosen accordingly to the probability distribution $\delta(s,\action_\Evei,\action_\Adami)$. Then the process restarts: the players choose a new pair of actions that induces, together with the current state, a new state and so on forever. Hence a \defin{play} is an infinite sequence $s_0(\action_{\Evei,0},\action_{\Adami,0})s_1(\action_{\Evei,1},\action_{\Adami,1})s_2(\action_{\Evei,2},\action_{\Adami,2})\cdots$ in $(\states\cdot(\actions_\Evei\times \actions_\Adami))^\omega$ such that for every $i\geq 0$, $\delta(s_i,\action_{\Evei,i},\action_{\Adami,i})(s_{i+1})>0$. In the sequel we refer to a prefix of a play (ending in a state) as a \defin{partial play} and we denote by $\plays{\arena}$ the set of all plays in arena $\arena$.

The intuitive meaning of $\requiv_\Evei$ (\emph{resp.} $\requiv_\Adami$) is that two states $s_1$ and $s_2$ such that $s_1\requiv_\Evei s_2$ (\emph{resp.} $s_1\requiv_\Adami s_2$) cannot be distinguished by \Eve (\emph{resp.} by \Adam). We easily extend the relation $\requiv_\Evei$ to partial plays (here Eve observes her actions, but does not observe Adam's actions): let $\lambda=s_0(\action_{\Evei,0},\action_{\Adami,0})s_1(\action_{\Evei,1},\action_{\Adami,1})\cdots s_n$ and $\lambda'=s_0'(\action_{\Evei,0}',\action_{\Adami,0}')s_1'(\action_{\Evei,1}',\action_{\Adami,1}')\cdots s_n'$ be two partial plays, then $\lambda\requiv_\Evei \lambda'$ if and only if $s_i\requiv_\Evei s'_i$ and $\action_{\Evei,i}=\action_{\Evei,i}'$ for all $i=0,\cdots,n$. 

Note that perfect information concurrent arenas (in the sense of \cite{HdAK97,dAH00}) correspond to the special case where $\requiv_\Evei$ and $\requiv_\Adami$ are the equality relation over $\states$.

\subsection{Strategies}

In order to choose their moves the players follow strategies, and, for this, they may use all the information they have about what was played so far. However, if two partial plays are equivalent for $\requiv_\Evei$, then \Eve cannot distinguish them, and should therefore behave the same. This leads to the following notion.

An \defin{observation-based strategy} for \Eve is a function $\phi_\Evei: (\states/_{\requiv_\Evei})\cdot(\actions_\Evei\cdot\states/_{\requiv_\Evei} )^*\rightarrow \distribution{\actions_\Evei}$,  \emph{i.e.}, to choose her next action, Eve considers the sequence of observations she got so far. In particular, a strategy $\phi_\Evei$ is such that $\phi_\Evei(\lambda)=\phi_\Evei(\lambda')$ whenever $\lambda\requiv_\Evei \lambda'$. Observation-based strategies for \Adam are defined similarly. 

Of special interest are those strategies that does not require memory: a \defin{memoryless observation-based strategies} for \Eve is a function from $\states/_{\requiv_\Evei}\rightarrow \distribution{\actions_\Evei}$, that is to say these strategies only depend of the current equivalence class.

A \defin{uniform strategy} for some player $X$ is a strategy $\phi$ such that for all partial play $\lambda$, the probability measure $\phi(\lambda)$ is uniform, \emph{i.e.}, for all action $\action_X\in\actions_X$, either $\phi(\lambda)(\action_X)=0$ or $\phi(\lambda)(\action_X)=\frac{1}{|\{\action_X\in\actions_X\mid \phi(\lambda)(\action_X)\neq 0\}|}$.
The set of memoryless uniform strategies for $X$ is a finite set containing $(2^{|\actions_X|}-1)^{|\states|}$ elements. Equivalently those strategies can be seen as functions to (non-empty) sets of (authorised) actions.

A \defin{finite-memory strategy} for \Eve with memory $M$ ($M$ being a finite set) is some triple $\phi=(Move,Up,m_0)$ where $m_0\in M$ is the initial memory, $Move:M\rightarrow \distribution{\actions_\Evei}$ associates a distribution of actions with any element in the memory $M$ and $Up:M\times \states/_{\requiv_\Evei} \times \actions_\Evei\rightarrow M$ is a mapping updating the memory with respect to some observation (and the last action played by \Eve). One defines $\phi(s_0)=Move(m_0)$ and $\phi(s_0(\action_{\Evei,0},\action_{\Adami,0})\cdots s_n)=Move(Up(\cdots Up(Up(m_0,[s_1]/_{\requiv_\Evei},\action_{\Evei,0}),[s_2]/_{\requiv_\Evei},\action_{\Evei,1}),\cdots,[s_n]/_{\requiv_\Evei},\action_{\Evei,n-1})\cdots)$ for any $n\geq 1$. Hence, a finite-memory strategy is some observation-based strategy that can be implemented by a finite transducer whose set of control states is $M$.

%

\subsection{Probability Space and Outcomes of Strategies}\label{subsection:proba}

Let $\arena = \tuple{\states,\actions_\Evei,\actions_\Adami,\ftrans,\requiv_\Evei,\requiv_\Adami}$ be a concurrent arena with imperfect information, let $s_0\in \states$ be an initial state, $\phi_\Evei$ be a strategy for \Eve and $\phi_\Adami$ be a strategy for \Adam. In the sequel we are interested in defining the probability of a (measurable) set of plays knowing that \Eve (\emph{resp.} \Adam) plays accordingly $\phi_\Evei$ (\emph{resp.} $\phi_\Adami$). This is done in the classical way: first one defines the probability measure for basic sets of plays (called here \emph{cones} and corresponding to plays having some initial common prefix) and then extends it in a unique way to all measurable sets.

First define $\outcomes{s_0}{\phi_\Evei}{\phi_\Adami}$ to be the set of all possible plays when the game starts on $s_0$ and when \Eve and \Adam plays respectively accordingly to $\phi_\Evei$ and $\phi_\Adami$. More formally, an infinite play $\lambda =s_0(\action_{\Evei,0},\action_{\Adami,0})s_1(\action_{\Evei,1},\action_{\Adami,1})s_2\cdots$  belongs to $\outcomes{s_0}{\phi_\Evei}{\phi_\Adami}$ if and only if, for every $i\geq 0$, 
$\phi_\Evei(s_0\action_{\Evei,0}s_1\action_{\Evei,1}\cdots s_i)(\action_{\Evei,i})>0$ and $\phi_\Adami(s_0\action_{\Adami,0}s_1\action_{\Adami,1}\cdots s_i)(\action_{\Adami,i})>0$ (\emph{i.e.} $\action_{X}$ is possible accordingly to $\phi_X$, for $X=\Evei,\Adami$).

Now, for any partial play $\lambda$, the \defin{cone} for $\lambda$ is the set $\cone{\lambda}=\lambda\cdot ((\actions_\Evei\times\actions_\Adami)\cdot\states)^\omega$ of all infinite plays with prefix $\lambda$. Denote by $\cones$ the set of all possible cones and let $\field$ be the Borel $\sigma$-field generated by $\cones$ considered as a set of basic open sets (\emph{i.e.} $\field$ is the smallest set containing $\cones$ and closed under complementation, countable union and countable intersection). Then $(\plays{\arena},\field)$ is a $\sigma$-algebra.

A pair of strategies $(\phi_\Evei,\phi_\Adami)$ induces a probability space over $(\plays{\arena},\field)$. Indeed one can define a measure $\mesure{s_0}{\phi_\Evei}{\phi_\Adami}:\cones \rightarrow [0,1]$ on cones (this task is easy as a cone is uniquely defined by a finite partial play) and then uniquely extend it to a probability measure on $\field$ using the Carath\'eodory Unique Extension Theorem. For this, one defines $\mesure{s_0}{\phi_\Evei}{\phi_\Adami}$ inductively on cones:
\begin{itemize}
\item $\mesure{s_0}{\phi_\Evei}{\phi_\Adami}(\cone{s}) = 1$ if $s=s_0$ and $\mesure{s_0}{\phi_\Evei}{\phi_\Adami}(s) = 0$ otherwise.
\item For every partial play $\lambda$ ending in some vertex $s$, 
$$\mesure{s_0}{\phi_\Evei}{\phi_\Adami}(\cone{\lambda\cdot (\action_\Evei,\action_\Adami)\cdot s'})= 
\mesure{s_0}{\phi_\Evei}{\phi_\Adami}(\cone{\lambda}).
\phi_\Evei(\lambda)(\action_\Evei). \phi_\Adami(\lambda)(\action_\Adami).\delta(s,\sigma_\Evei,\sigma_\Adami)(s')$$
\end{itemize}

Denote by $\proba{s_0}{\phi_\Evei}{\phi_\Adami}$ the unique extension of $\mesure{s_0}{\phi_\Evei}{\phi_\Adami}$ to a probability measure on $\field$. Then $(\plays{\arena},\field,\proba{s_0}{\phi_\Evei}{\phi_\Adami})$ is a probability space. 

\subsection{Objectives, Value of a Game}\label{subsection:objectives}

Fix a concurrent arena with imperfect information $\arena$. An objective for \Eve is a measurable set $\objective\subseteq \plays{\arena}$: a play is won by her if it belongs to $\objective$; otherwise it is won by \Adam. A \defin{concurrent game with imperfect information} is a triple $(\arena,s_0,\objective)$ where $\arena$ is a concurrent arena with imperfect information, $s_0$ is an initial state and $\objective$ is an objective. 
In the sequel we focus on the following special classes of objectives (note that all of them are Borel sets hence measurable) that we define as means of a subset $F\subseteq \states$ of \defin{final states}.

\begin{itemize}
\item\defin{Reachability objective}: a play is winning if it eventually goes through some final state.
\item \defin{Safety objective}: a play is winning if it never goes through a final state.
\item \defin{B\"uchi objective}: a play is winning if it goes infinitely often through final states.
\item \defin{Co-B\"uchi objective}: a play is winning if it goes finitely often through final states.
\end{itemize}

A reachability (\emph{resp.} safety, B\"uchi, co-B\"uchi) game is a game equipped with a reachability (\emph{resp.} safety, B\"uchi, co-B\"uchi) objective. In the sequel we may replace $\mathcal{O}$ by $F$ when it is clear from the context which winning condition we consider.

Fix a concurrent game with imperfect information $\game= (\arena,s_0,\objective)$. A strategy $\phi_\Evei$ for \Eve is \defin{almost-surely winning} if, for any counter-strategy $\phi_\Adami$ for \Adam, $\proba{s_0}{\phi_\Evei}{\phi_\Adami}(\objective)=1$. If such a strategy exists, we say that \Eve \defin{almost-surely wins} $\game$.
A strategy $\phi_\Evei$ for \Eve is \defin{positively winning} if, for any counter-strategy $\phi_\Adami$ for \Adam, $\proba{s_0}{\phi_\Evei}{\phi_\Adami}(\objective)>0$. If such a strategy exists, we say that \Eve \defin{positively wins} $\game$.

\section{Knowledge Arena}

For the rest of this section we let $\arena$ be a concurrent arena with imperfect information with $\arena = \tuple{\states,\actions_\Evei,\actions_\Adami,\ftrans,\requiv_\Evei,\requiv_\Adami}$ and let $s_0\in \states$ be some initial state.


Even, if she does not see the precise control state, \Eve can deduce information about it from previous information on the control state and from the action she just played. We should refer to this as the \defin{knowledge} of \Eve, which formally is a set of states.
Assume \Eve knows that the current state belongs to some set $K\subseteq \states$. After the next move \Eve observes the equivalence class $[s]_{\requiv_{\Evei}}$ of the new control state and she also knows that she played action $\action_\Evei$: hence she can compute the set of possible states the play can be in. This is done using the function $\updateK{}:2^\states\times [\states]_{/_{\requiv_{\Evei}}}\times {\actions_\Evei}\rightarrow 2^{\actions_\states}$ defined by letting
$$
   \updateK{}(K,[s]_{\requiv_{\Evei}},\action_\Evei) = \{t\requiv_{\Evei} s \mid \exists r\in K,\ \exists   \action_\Adami\in \actions_\Adami\text{ s.t. }\ftrans(r,\action_\Evei,\action_\Adami)(t)>0\}
$$
\emph{i.e.} in order to update her current knowledge, observing in which equivalence class is the new control state, and knowing that she played $\action_\Evei$, \Eve computes the set of all states in this class that may be reached from a state in her former knowledge.

Finally, we let \begin{multline*}\knowledge{}(s_0\action_{\Evei,0}s_1\cdots s_{n})=\\
\updateK{}(\updateK{}(\cdots\updateK{}(\{s_0\},[s_1]_{\requiv_{\Evei}},\action_{\Evei,1})\cdots)
,[s_n]_{\requiv_{\Evei}},\action_{\Evei,n})\end{multline*}

Based on our initial remark and on the notion of knowledge we define the \defin{knowldege arena associated with $\arena$}, denoted $\arena^\Ki$. The arena $\arena^\Ki$ is designed to make explicit the information \Eve can collect about the possible current state (\emph{i.e.} the knowledge). 
We define $\arena^\Ki=\tuple{\states^\Ki,\actions_\Evei,\actions_\Adami,\ftrans^\Ki,\requiv_{\Evei}^{\Ki},\requiv_{\Adami}^{\Ki}}$ as follows:
\begin{itemize}
\item $\states^\Ki = \{(s,K)\in \states\times 2^\states\mid K\subseteq [s]_{/\requiv{\Evei}}\}$: the first component is the real state and the second one is the current knowledge of \Eve;
\item $\ftrans^\Ki((s,K),\action_\Evei,\action_\Adami)(s',K')=0$ if $K'\neq \updateK{}(K,$ $[s']_{\requiv_{\Evei}},\action_\Evei)$;\\ and $\ftrans^\Ki((s,K),\action_\Evei,\action_\Adami)(s',K')=\ftrans(s,\action_\Evei,\action_\Adami)(s')$ otherwise: $\ftrans^\Ki$ behaves as $\ftrans$ on the first components and \emph{deterministically} updates the knowledge;
\item $(s,K)\requiv_{\Evei}^{\Ki} (s',K'')$ if and only if $K = K'$ (implying $s\requiv_\Evei s'$): \Eve only observes her knowledge;
\item $(s,K)\requiv_{\Adami}^{\Ki} (s',K')$ if and only if $s\requiv_\Adami s'$: \Adam does not observe \Eve's knowledge.
\end{itemize}


Consider an observation-based strategy $\phi$ for \Eve in the arena $\arena$. Then it can be converted into an observation-based strategy on the associated knowledge arena. For this, remark that in the knowledge arena, those states reachable from the initial state $(s_0,\{s_0\})$ are of the form $(s,K)$ with all states in $K$ being equivalent with $s$ with respect to $\requiv_\Evei$. Then one can define 
$$\phi^\Ki((s_0,K_0)\action_{\Evei,0}(s_1,K_1)\action_{\Evei,1}\cdots (s_n,K_n))=\phi([s_0]_{\requiv_\Evei}\action_{\Evei,0}[s_1]_{\requiv_\Evei}\action_{\Evei,1}\cdots[s_n]_{\requiv_\Evei})$$

Note that $\phi^\Ki$ is observation-based as, for all $0\leq h\leq n$, $[s_h]_{\requiv_\Evei}$ is uniquely defined from the $K_h$, that are observed by \Eve in the knowledge arena. 

Conversely, any observation-based strategy in the knowledge arena can be converted into an observation-based strategy in the original arena. Indeed, consider some observation-based strategy $\phi^\Ki$ in the knowledge arena: it is a mapping from $2^\states\cdot(\actions_\Evei\cdot 2^\states)^*$ into $\distribution{\actions_\Evei}$ (the equivalent classes of the relation $\requiv_\Evei^\Ki$ are, by definition, isomorphic with $2^\states$). Now, note that \Eve can, while playing in $\arena$, compute on the fly her current knowledge (applying function $\updateK{}$ to her previous knowledge and to the last action she played): hence along a play $s_0(\action_{\Evei,0},\action_{\Adami,0})s_1(\action_{\Evei,1},\action_{\Adami,1})\cdots s_n$ she can compute the corresponding sequence $K_0K_1\cdots K_n$ of knowledges.
Now it suffices to consider the observation-based strategy $\phi$ for \Eve in the initial arena defined by:
$$\phi(s_0\action_{\Evei,0} s_1\action_{\Evei,1}\cdots s_n)  = \phi^\Ki(K_0\action_{\Evei,0}K_1\action_{\Evei,1}\cdots K_n)$$

Note that this last transformation (taking a strategy $\phi^\Ki$ and producing a strategy $\phi$) is the inverse of the first transformation (taking a strategy $\phi$ and producing a strategy $\phi^\Ki$). In particular, it proves that the observation-based strategies in both arena are in bijection.
It should be clear that those strategies for \Adam in both games are the same (as what he observes is identical). 

Assume that $\arena$ is equipped with a set $F$ of final states. Then one defines the final states in $\arena^\Ki$ by letting $F^\Ki=\{(f,K)\mid f\in F\}\cap \states^\Ki$: this allows to define an objective $\objective^\Ki$ in $\arena^\Ki$ from an objective $\objective$ in $\arena$. Based on the previous observations, we derive the following.

\begin{proposition}
Let $\game=(\arena,s_0,\objective)$ be some imperfect information game equip\-ped with a reachability (\emph{resp.} saftey, B\"uchi, co-B\"uchi) objective. Let $\game^\Ki=(\arena^\Ki,(s_0,\{s_0\}),\objective^\Ki)$ be the associated game played on the knowledge arena. Then for any strategies $\phi_\Evei,\phi_\Adami$ for \Eve and \Adam, the following holds:\\
$\proba{s_0}{\phi_\Evei}{\phi_\Adami}(\objective) = \proba{(s_0,\{s_0\})}{\phi_\Evei^\Ki}{\phi_\Adami}(\objective^\Ki)$.
In particular, \Eve has an almost-surely winning observation-based strategy in $\game$ if and only if she has one in $\game^\Ki$. 
\end{proposition}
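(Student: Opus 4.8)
The plan is to realise the knowledge arena $\arena^\Ki$ as a \emph{deterministic annotation} of $\arena$: the extra components $K$ and $D$ carry no probabilistic content of their own, so the forgetting map ought to identify the two probability spaces. Fix observation-based strategies $\phi_\Evei$ for \Eve and $\phi_\Adami$ for \Adam, and note that $\phi_\Adami$ is simultaneously a strategy in $\arena^\Ki$ since $\requiv_\Adami^\Ki$ reads only the first component. Write $\pi:\states^\Ki\to\states$, $(s,K,D)\mapsto s$, extended coordinate-wise to finite and infinite sequences.

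The first step is to pin down, for every partial play $\lambda=s_0\cdots s_n$ of $\arena$, a unique \emph{canonical lift} $\widehat\lambda=(s_0,K_0,D_0)\cdots(s_n,K_n,D_n)$ in $\arena^\Ki$: set $K_0=\{s_0\}$, $D_0=\emptyset$, let $D_{i+1}$ be the domain of the distribution $\phi_\Evei([s_0]_{\requiv_\Evei}\cdots[s_i]_{\requiv_\Evei})$, and $K_{i+1}=\updateK{}(K_i,[s_{i+1}]_{\requiv_\Evei},D_{i+1})$. Since $\phi_\Evei^\Ki$ produces only well-formed distributions and $\ftrans^\Ki$ updates the last two components deterministically, these are the only values that can occur with positive probability; an easy induction gives the invariant $s_i\in K_i\subseteq[s_i]_{\requiv_\Evei}$, so $\widehat\lambda$ is a genuine partial play of $\arena^\Ki$, and in fact $\pi$ restricts to a bijection from $\outcomes{(s_0,\{s_0\},\emptyset)}{\phi_\Evei^\Ki}{\phi_\Adami}$ onto $\outcomes{s_0}{\phi_\Evei}{\phi_\Adami}$ with inverse $\lambda\mapsto\widehat\lambda$.

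Next I would prove, by induction on the length of $\lambda$, the two facts $\mesure{(s_0,\{s_0\},\emptyset)}{\phi_\Evei^\Ki}{\phi_\Adami}(\cone{\widehat\lambda})=\mesure{s_0}{\phi_\Evei}{\phi_\Adami}(\cone{\lambda})$ and $\mesure{(s_0,\{s_0\},\emptyset)}{\phi_\Evei^\Ki}{\phi_\Adami}(\cone{\widehat\mu})=0$ for every partial play $\widehat\mu$ of $\arena^\Ki$ that is not a canonical lift. Indeed, in the recursive formula for $\mesure{(s_0,\{s_0\},\emptyset)}{\phi_\Evei^\Ki}{\phi_\Adami}(\cone{\widehat\lambda\cdot(s',K',D')})$, every summand whose \Eve-action or whose target violates the deterministic update makes $\ftrans^\Ki$ vanish; for the surviving choice one has $\phi_\Evei^\Ki(\widehat\lambda)((\action_\Evei,D'))=\phi_\Evei(\lambda)(\action_\Evei)$, $\phi_\Adami(\widehat\lambda)(\action_\Adami)=\phi_\Adami(\lambda)(\action_\Adami)$, and $\ftrans^\Ki((s_n,K_n,D_n),(\action_\Evei,D'),\action_\Adami)((s',K',D'))=\ftrans(s_n,\action_\Evei,\action_\Adami)(s')$, so the sum collapses exactly to the one defining $\mesure{s_0}{\phi_\Evei}{\phi_\Adami}(\cone{\lambda\cdot s'})$ and the induction hypothesis closes the step. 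Now $\pi^{-1}(\cone{\lambda})$ is the \emph{finite} disjoint union of the cones $\cone{\widehat\mu}$ with $\pi(\widehat\mu)=\lambda$, so $\pi$ is measurable and $\nu(A):=\proba{(s_0,\{s_0\},\emptyset)}{\phi_\Evei^\Ki}{\phi_\Adami}(\pi^{-1}(A))$ is a probability measure on $(\plays{\arena},\field)$ which, by the two facts above, agrees with $\mesure{s_0}{\phi_\Evei}{\phi_\Adami}$ on every cone; since $\proba{s_0}{\phi_\Evei}{\phi_\Adami}$ is by construction the unique extension of $\mesure{s_0}{\phi_\Evei}{\phi_\Adami}$, the uniqueness part of Carath\'eodory's theorem gives $\nu=\proba{s_0}{\phi_\Evei}{\phi_\Adami}$ on all of $\field$. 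Finally, for reachability (resp. safety, B\"uchi, co-B\"uchi) one checks $\pi^{-1}(\objective)=\objective^\Ki$ exactly, because $F^\Ki=\pi^{-1}(F)\cap\states^\Ki$ and membership in any of these shapes depends only on which positions are final; hence $\proba{s_0}{\phi_\Evei}{\phi_\Adami}(\objective)=\nu(\objective)=\proba{(s_0,\{s_0\},\emptyset)}{\phi_\Evei^\Ki}{\phi_\Adami}(\objective^\Ki)$, which is the claimed equality. The ``in particular'' then follows at once: \Eve's observation-based strategies in the two games are in bijection via $\phi\mapsto\phi^\Ki$ (already established) and \Adam's strategies coincide, so the displayed equality says precisely that $\phi$ is almost-surely winning in $\game$ iff $\phi^\Ki$ is almost-surely winning in $\game^\Ki$.

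The routine part is the double induction on partial plays, driven entirely by the fact that $\ftrans^\Ki$ copies $\ftrans$ on the first component while updating the annotations deterministically, together with well-formedness of $\phi_\Evei^\Ki$. The one point that deserves care — and the only real obstacle — is the measure-theoretic gluing: a cone of $\arena$ does not pull back to a single cone of $\arena^\Ki$ but to a finite union of cones, all but one of measure zero, and the equality on this generating $\pi$-system of cones must then be propagated to the whole Borel $\sigma$-field via the uniqueness of the Carath\'eodory extension (equivalently, a $\pi$--$\lambda$ argument).
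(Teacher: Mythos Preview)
Your argument is correct. In fact you have written out considerably more than the paper does: the paper states this proposition without proof, simply prefacing it with ``Based on the previous observations, we derive the following'', where the ``previous observations'' are exactly the bijection $\phi\leftrightarrow\phi^\Ki$ between \Eve's observation-based strategies in the two arenas and the identification of \Adam's strategies. Your write-up supplies the measure-theoretic details the paper leaves implicit --- the canonical lift, the cone-by-cone induction, and the Carath\'eodory uniqueness step --- and these are precisely the right details to fill in.
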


In the setting of the previous proposition, consider the special case where \Eve has an almost-surely winning observation-based strategy $\phi^\Ki$ in $\game^\Ki$ that only depends on the current knowledge (in particular, it is \emph{memoryless}). Then the corresponding almost-surely winning observation-based strategy $\phi$ in $\game$ is, in general, not memoryless, but can be implemented by a finite transducer whose set of control states is precisely the set of possible knowledges for \Eve. More precisely the strategy consists in computing and updating on the fly (using a finite automaton) the value of the knowledge after the current partial play and to pick the next action by solely considering the knowledge. We may refer at such a strategy $\phi$ as a \defin{knowledge-only strategy}.

\section{Reachability Objectives}

The main result of this section is the following.

\begin{theorem}\label{theo:reachability}
For any reachability concurrent game with imperfect information, one can decide, in doubly exponential time, whether \Eve has an almost-surely winning strategy. If \Eve has such a strategy then she has a knowledge-only uniform strategy, and such a strategy can be effectively constructed.
\end{theorem}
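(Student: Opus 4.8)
The plan is to work on the \emph{knowledge arena}. By the preceding Proposition, \Eve almost-surely wins $\game=(\arena,s_0,\objective)$ if and only if she almost-surely wins $\game^\Ki$, and a memoryless observation-based strategy on $\game^\Ki$ that depends only on the knowledge component is exactly a knowledge-only strategy on $\game$. Hence it suffices to prove: (i) whether \Eve almost-surely wins reachability in $\game^\Ki$ is decidable in time at most exponential in $|\arena^\Ki|$ --- which, since $|\arena^\Ki|$ is exponential in $|\arena|$, is doubly exponential in $|\arena|$; and (ii) when she wins, a memoryless uniform strategy on $\game^\Ki$ depending only on the knowledge component witnesses it. Uniformity is essentially free: the objective is purely qualitative, so replacing every distribution \Eve plays by the uniform distribution over its support changes neither the set of possible plays nor any ``positive probability'' statement used in the analysis.

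Next I would characterise \Eve's almost-surely winning region $W\subseteq\states^\Ki$ in $\game^\Ki$ as the limit of a decreasing sequence $\states^\Ki=W_0\supseteq W_1\supseteq\cdots$ built by the usual least-fixpoint-inside-greatest-fixpoint scheme: given $W_i$, keep in $W_{i+1}$ exactly those states from which \Eve has a uniform knowledge-only strategy that (a) plays only actions that are \emph{safe for $W_i$} --- whatever \Adam does, every possible successor lies in $W_i$ --- and (b) against \emph{every} observation-based counter-strategy of \Adam reaches $\fstates$ with positive probability. The sequence stabilises after at most $|\states^\Ki|$ steps; let $W$ be its limit. For soundness, from any state of $W$ \Eve plays the strategy of clause (b) for $W$ itself: the play then provably never leaves $W$, and since $W$ is finite and the reachability probability is positive against every \Adam, it is bounded below by some $p>0$ within $|W|$ steps, so a renewal/Borel--Cantelli argument yields probability $1$ of eventually visiting $\fstates$; the witnessing strategy is memoryless and depends only on the knowledge component, which gives (ii). For completeness one shows that the genuine almost-surely winning region of $\game^\Ki$ satisfies the defining clause of the sequence --- under an almost-surely winning strategy the play stays almost surely inside that region (here it is essential that on the knowledge arena \Eve always knows whether her current state is winning, and that \Adam's strategies can be glued), and ``almost-surely winning'' trivially entails ``reaching $\fstates$ with positive probability'' --- so it is contained in the greatest fixpoint $W$; together with soundness, $W$ is exactly the almost-surely winning region and testing $s_0^\Ki\in W$ decides the game.

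The crux, and where I expect the real work, is clause (b): deciding, in the sub-arena of $W_i$-safe actions, from which states \Eve has a uniform knowledge-only strategy reaching $\fstates$ with positive probability against every observation-based \Adam. Contrary to \Eve, whose imperfect information has already been compiled away by the knowledge construction, \Adam remains genuinely imperfectly informed, so one cannot simply treat him as omniscient --- doing so would wrongly declare some states lost for \Eve, and the naive one-step ``almost-sure predecessor'' operator is therefore not by itself correct. Dually, clause (b) fails at a state precisely when \Adam has an observation-based strategy securing, with positive probability, the complementary safety objective; so the required subroutine is the positive-winning problem for safety in a partial-observation Markov decision process controlled by \Adam. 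For this I would establish the announced intermediate result: a subset construction tracking \Adam's knowledge together with a greatest-fixpoint computation identifying the states from which \Adam has an observation-based strategy that, with positive probability, keeps the play forever outside $\fstates$; one must also check that uniform memoryless strategies suffice on \Eve's side of this subroutine, so that the strategy extracted for the whole game is uniform and knowledge-only. Finally, for complexity, each outer step together with the safety subroutine inside it runs in time at most exponential in $|\arena^\Ki|$ and there are at most $|\states^\Ki|$ outer steps, so the whole procedure is doubly exponential in $|\arena|$; alternatively, since there are only doubly-exponentially many uniform knowledge-only strategies of \Eve on $\arena^\Ki$, one may enumerate them and test each with the safety subroutine.
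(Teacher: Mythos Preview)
Your plan is close in spirit to the paper --- indeed the paper's Discussion explicitly mentions a fixpoint-style algorithm of this kind as desirable future work --- but there is a genuine slip in the dualization that propagates through both the algorithm and the soundness argument.

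The error is in the sentence ``clause (b) fails at a state precisely when \Adam has an observation-based strategy securing, with positive probability, the complementary safety objective''. Your clause (b) reads: for every \Adam strategy $\sigma$, $\proba{}{\phi}{\sigma}(\text{reach }\fstates)>0$. Its negation is: there exists $\sigma$ with $\proba{}{\phi}{\sigma}(\text{reach }\fstates)=0$, i.e.\ \Adam \emph{surely} wins safety --- not \emph{positively}. So the subroutine you need to decide clause (b) in each round of the iteration is the \emph{sure}-safety problem for \Adam's POMDP (a subset construction on \Adam's side followed by a one-player safety computation), not the positive-safety problem of Lemma~\ref{lemma:safety1.5}.

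This matters also for your soundness step. The renewal argument ``bounded below by some $p>0$ within $|W|$ steps'' is not justified: \Adam has infinitely many strategies, and nothing you have written rules out that the infimum of the $|W|$-step reaching probability over \Adam's strategies is $0$ even though every individual strategy eventually reaches $\fstates$ with positive probability. The way the paper closes this gap --- and the place where Lemma~\ref{lemma:safety1.5} is actually used --- is different: once \Eve's knowledge-only strategy $\phi$ is fixed, one is in a \onehalf-player safety game for \Adam; if $\phi$ were not almost-surely winning then \Adam would positively win safety, and by Lemma~\ref{lemma:safety1.5} he would do so with a \emph{finite-memory} strategy. Taking the product with this finite memory yields a finite Markov chain with a bottom component avoiding $\fstates$; from any state in that component, \Adam (restarted in the corresponding memory) achieves $\proba{}{\phi}{\sigma}(\text{reach }\fstates)=0$, contradicting clause (b). So Lemma~\ref{lemma:safety1.5} is essential, but for soundness of your fixpoint characterisation, not as the subroutine for clause (b).

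For comparison, the paper does not iterate: it defines $\wknowledge{}$ directly as the set of knowledges from which \emph{some} observation-based strategy is almost-surely winning, takes $\phi$ to be the knowledge-only strategy that plays uniformly among all actions keeping the knowledge inside $\wknowledge{}$, and proves $\phi$ almost-surely winning via the reduction to finite-memory \Adam just described (Proposition~\ref{prop:uniformstratiswinning}). The algorithm is then the brute-force one you mention at the end: enumerate the doubly-exponentially many knowledge-only uniform strategies and test each by asking, via Lemma~\ref{lemma:safety1.5}, whether \Adam can positively win the induced \onehalf-player safety game.
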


Before proving Theorem~\ref{theo:reachability} we first establish an intermediate result on positively winning in \onehalf-player safety game

\subsection{Positively winning in \onehalf-player safety game with imperfect information}

A concurrent game (with imperfect information) in which one player has only a single available action is what we refer as a \defin{\onehalf-player game with imperfect information} (those games are also known in the literature as \emph{partially observable Markov Decision Processes}). The following result is a key ingredient for the proofs of Proposition~\ref{prop:uniformstratiswinning} and Theorem~\ref{theo:reachability}.

\begin{lemma}\label{lemma:safety1.5}
Consider an \onehalf-player safety game with imperfect information. Assume that the player has an observation-based strategy that is positively winning.
Then she also has an observation-based finite memory strategy that is positively winning.
Moreover, both the strategy and the set of positively winning states can be computed in time $\mathcal{O}(2^{|\states|})$.
\end{lemma}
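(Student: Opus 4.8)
We have a $1\frac12$-player safety game with imperfect information: one player (say \Eve) plays and the other has a trivial action set, so the arena is really a POMDP with a safety objective $(\states\setminus F)^\omega$. Passing to the knowledge arena $\arena^\Ki$ via the Proposition, \Eve's knowledge is a subset of some $\requiv_\Evei$-class, and in the $1\frac12$-player case her strategy can be taken to depend only on the knowledge (there is no adversary randomization to react to — formally, one argues that since \Adam has a single action, the distribution on successors depends only on \Eve's choice and the current knowledge, so it suffices to fix one distribution per knowledge set, i.e. a memoryless strategy on $\arena^\Ki$). So it is enough to work with the finite set $\mathcal K\subseteq 2^\states$ of reachable knowledges and decide, for each knowledge, whether \Eve can positively ensure safety from it.

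**The fixpoint.** The plan is to compute the set $\winningregion\subseteq\mathcal K$ of knowledges from which \Eve positively wins safety by a greatest-fixpoint (attractor-style) elimination, analogous to the classical sure-safety computation but one level up on the knowledge automaton. Start with $\winningregion_0 = \{K\in\mathcal K\mid K\cap F=\emptyset\}$ (knowledges containing a final state are immediately lost, since the real state might be final). Then iteratively remove a knowledge $K$ from the current set $\winningregion_i$ if for \emph{every} action $\action\in\actions_\Evei$ available to \Eve there is a successor knowledge $\updateK{}(K,[s']_{\requiv_\Evei},\{\action\})$ — or more precisely, the knowledge update in $\arena^\Ki$ — that has already been removed; equivalently, $K$ survives iff \Eve has some action all of whose possible resulting knowledges still lie in $\winningregion_i$. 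This is monotone decreasing, so it stabilizes in at most $|\mathcal K|\le 2^{|\states|}$ steps, each step costing time polynomial in $2^{|\states|}$; hence the overall $\mathcal O(2^{|\states|})$ bound (up to the polynomial factor, which one absorbs into the $\mathcal O$ or states separately). The positively-winning \emph{states} of the original game are then those $s$ with $\{s\}\in\winningregion$, and more generally a state is positively winning iff the initial knowledge it induces lies in $\winningregion$.

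**Correctness and the strategy.** For soundness I would show: if $K\in\winningregion$, define the memoryless knowledge-only strategy $\psi$ that, at knowledge $K$, plays (deterministically, hence uniformly on a singleton) an action $\action_K$ witnessing that all of $K$'s successor knowledges lie in $\winningregion$. Playing $\psi$ from $K$, every reachable knowledge stays in $\winningregion$ forever, hence never contains a final state; since the support of the play's knowledge always contains the true state, no final state is ever visited along \emph{any} outcome consistent with $\psi$. In particular $\proba{}{\psi}{\cdot}((\states\setminus F)^\omega)=1>0$, so $\psi$ is even almost-surely winning from $\winningregion$ — a fortiori positively winning. For completeness (the harder direction): suppose \Eve has an observation-based positively winning strategy $\phi$ from knowledge $K$, i.e. $\proba{}{\phi}{\cdot}((\states\setminus F)^\omega)>0$. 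I claim $K\in\winningregion$. The cleanest argument: if $K\notin\winningregion$, then $K$ was removed at some stage $i$; by König's lemma / induction on the rank at which knowledges are removed, from $K$ \emph{every} strategy of \Eve has the property that along every branch the knowledge is driven, within a bounded number of steps $\le|\mathcal K|$, into $\winningregion_0^c$ (a knowledge meeting $F$) — because at each removed knowledge, \emph{all} of \Eve's actions lead (for some observation) to a strictly-earlier-removed knowledge, and the true state can realize that observation with positive probability since the knowledge update exactly tracks the support of possible true states. Hence from $K$ there is a uniform positive lower bound on the probability of hitting $F$ within $|\mathcal K|$ steps under \emph{any} strategy, and iterating (a Borel–Cantelli / geometric-decay argument over successive blocks of $|\mathcal K|$ steps) shows $F$ is hit with probability $1$, contradicting that $\phi$ is positively winning for safety. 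Therefore $K\in\winningregion$.

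**Main obstacle.** The delicate point is the completeness direction, specifically making rigorous the claim that from a removed knowledge \emph{no} strategy — not even a randomized, history-dependent, observation-based one — can avoid $F$ with positive probability. The subtlety is that \Eve's strategy in $\arena^\Ki$ may randomize over actions (choosing a non-singleton domain $D$), so the relevant knowledge update is $\updateK{}(K,[s']_{\requiv_\Evei},D)$ for the chosen domain $D$, not for a singleton; one must check that if $K$ is removed then for every non-empty $D$ some resulting knowledge was removed earlier (this follows because $\updateK{}(K,\cdot,D)\supseteq\updateK{}(K,\cdot,\{\action\})$ componentwise isn't quite it — rather one argues via the structure of the elimination that a larger domain only enlarges the set of possible successor knowledges, each of which is "bad"), together with the positive-probability realization of the corresponding observation by the true state. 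Once that reachability-with-positive-probability fact is in place, the geometric iteration giving probability-$1$ reachability of $F$ is standard. I would also remark that exactly this $\winningregion$ and witness strategy are what later feed into Proposition~\ref{prop:uniformstratiswinning} and Theorem~\ref{theo:reachability}.
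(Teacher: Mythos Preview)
Your fixpoint $\winningregion$ correctly computes the set of \emph{surely} winning knowledges, and your strategy $\psi$ on $\winningregion$ is correct. The gap is in the completeness direction: you claim that if the initial knowledge $K\notin\winningregion$ then every strategy reaches $F$ with probability~$1$, but this is false. Consider a perfect-information MDP with states $s,u,f$, final set $\{f\}$, and a single action: from $s$ go to $u$ or $f$ each with probability $\frac12$, and $u,f$ are absorbing. Then $\{u\}\in\winningregion$ but $\{s\}\notin\winningregion$ (its only action has $\{f\}$ as a possible successor knowledge), yet from $s$ the unique strategy wins the safety game with probability $\frac12>0$. Your geometric-decay step breaks precisely here: after one block the play may have entered $\winningregion$ (landing in $u$), and from there the ``uniform positive lower bound on hitting $F$'' no longer applies, so the iteration does not force probability~$1$.

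The paper's proof separates the two notions. It first computes the surely winning knowledges (essentially your $\winningregion$) and extracts the surely winning singletons $\{s\}$; it then characterises the \emph{positively} winning states as those connected, in the underlying graph, to some surely winning state by a path through non-final states. The finite-memory positively winning strategy has two phases: first play uniformly at random (hoping to traverse such a path and reach a surely winning state), then switch to the knowledge-only surely winning strategy. In the example above, $s$ is positively winning because it is one non-final step away from the surely winning state $u$. Your argument recovers only the first (surely winning) part and misidentifies it with the positively winning set; the missing ingredient is this graph-reachability layer and the accompanying two-phase strategy.
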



We first simplify the notations for the special case of \onehalf-player games. We define an \defin{\onehalf-player arena} as a tuple $\arena=\langle \states,\actions,\ftrans,\fstates\rangle$ where $\states$ is a finite set of control states, $\actions$ is a finite set of actions, $\fstates\subseteq \states$ is a set of final states and $\ftrans:\states\times \actions\rightarrow \distribution{\states}$ is the transition (total) function.
A \defin{\onehalf-player game} is a tuple $\game=(\arena,\requiv,s_0,\mathcal{O})$ where $\requiv$ is an equivalence relation on states, $s_0\in S$ is an initial state and $\mathcal{O}$ is an objective (safety in the sequel). The notions of knowledge, knowledge arena and knowledge-based (memoryless) strategy are trivially adapted to this setting.

We will be interested in a special kind of strategies, that we call \defin{ultimately knowledge-based memoryless strategy}. Such a strategy consists in playing randomly trying to reach a target state $t$ and then play in a knowledge-based memoryless fashion assuming that $t$ as been effectively reached.
More formally, a strategy $\phi$ is an \defin{ultimately knowledge-based memoryless strategy} if it is of the following form (where $t\in \states$ and $k<|\states|$): play uniformly randomly any action in $\actions$ on the $k$ first moves and then set knowledge to be $\{t\}$, update it along the play and pick the moves only depending on the current knowledge.
Hence, it is of the following form:
\begin{itemize}
	\item $\phi(s_0\action_{\Evei,0}s_1\cdots s_n) = d_{\mathrm{univ}}$ if $n< k$;
	\item otherwise, $\phi(s_0\action_{\Evei,0}s_1\cdots s_n) = d_{\knowledge{}(t\action_{\Evei,k}\cdot s_{k+1}\action_{\Evei,k+1}\cdots s_n)}$ where $d_{\mathrm{univ}}$ is the uniform distribution over $\actions$ and $d_K\in\distribution{\actions}$ for any $K\subseteq \states$. 
\end{itemize}

In particular those strategies are \emph{finite memory strategy} (and the memory needed is the set $\{1,\cdots k\}\cup2^\states$).

Lemma \ref{lemma:safety1.5} is a direct consequence of the following slightly more precise lemma.

\begin{lemma}\label{lemma:safety1.5bis}
Consider a \onehalf-player safety game with imperfect information. Assume that the player has an observation-based strategy that is positively winning.
Then she also has an ultimately knowledge-based memoryless strategy that is positively winning.
Moreover, both the strategy and the set of positively winning states can be computed in time $\mathcal{O}(2^{|\states|})$.\rm
\end{lemma}

\begin{proof}

In the sequel we will be interested in computing the set $\poswstates$ of those states $s\in S$ for which the player has an observation-based strategy that  is positively winning.

Consider the knowledge arena $\arena^\Ki$ and the knowledge game $\game^\Ki$ associated with $\arena$ and $\game$. We define a notion of almost-surely winning knowledge by letting
\begin{multline*}
	\wknowledge{}=\{K\in 2^\states\mid \exists \varphi \text{ knowledge-based strategy s.t. }
	\forall s\in K,\\
	\varphi \text{ is almost-surely winning for the player in } \game^\Ki \text{ from }(s,K)\}
\end{multline*}

We claim that a knowledge $K\in\wknowledge{}$ is actually surely winning: there exists a knowledge-based strategy $\phi$ such that $\forall s\in K$, $\outcomesunetdemi{\phi}{(s,K)}\subseteq ((\states\setminus F)\actions_\Evei)^\omega$, \emph{i.e.} playing accordingly to $\phi$ the player is sure that no play will be loosing for her. Indeed, consider the following (decreasing and bounded) sequence of knowledges:
$$\begin{cases}
\mathcal{K}_0 = 2^{\states\setminus F}\\
\mathcal{K}_{i+1} = \mathcal{K}_{i} \cap Pre(\mathcal{K}_{i})\\
\end{cases}
$$
where $$Pre(\mathcal{K})=\{K\in 2^{\states\setminus F}\mid \exists \action,\ \forall s\in K,\ \ftrans^\Ki((s,K),\action)((s',K'))>0\Rightarrow K'\in\mathcal{K} \}$$ 
is the set of (non-final) knowledges from which the player is sure that in the next step the play will be in $\mathcal{K}$.

Let $\mathcal{K}^*$ be the limit of the sequence $(\mathcal{K}_i)_{i\geq 0}$. Then, we have the following fact.

\begin{fact}\label{fact:wknowledge}
The following equality holds: $\wknowledge{}=\mathcal{K}^*$.
\end{fact}
\begin{proof}
The inclusion $\wknowledge{}\supseteq \mathcal{K}^*$ is immediate and it also come with a deterministic surely winning strategy for the player that simply consists in playing an action that ensures to stay inside $\mathcal{K}^*$ (such an action exists by definition of the $Pre$ operator). For the converse inclusion, we prove that a knowledge $K\notin \mathcal{K}^*$ cannot be in $\wknowledge{}$. For such a $K$ we define its rank $rk(K)=i$ to be unique interger $i$ such that $K\in \mathcal{K}_i\setminus \mathcal{K}_{i+1}$ (by convention we let $\mathcal{K}_{-1}=2^\states$) and we prove the result by induction on $rk(K)$. For $rk(K)=0$, the result is immediate. Now, assume that it holds for some $i$ and let $K$ be some knowledge with $rk(K)=i+1$: for all $\action\in\actions$, there is some $s\in\states$ and some configuration $(s',K')$ with $\ftrans^\Ki((s,K),\action)((s',K'))>0$ and $rk(K')\leq i$: hence for any knowledge-based strategy $\phi$ there is some $s\in K$ such that the probability, after one move starting from $(s,K)$, of reaching a configuration with a knowledge having a rank smaller or equal than $i$, is strictly positive. One concludes then by induction that $K\notin \wknowledge{}$.
\finpreuve{Fact~\ref{fact:wknowledge}}
\end{proof}

The following short example gives some intuition for the next construction.

\begin{example}\label{example:example1-halfgame}
Consider the arena depicted in Figure~\ref{figure:example1-halfgame} (edges are labeled by both the pair of actions and the probability of reaching their target; final states are double circled). We assume that in this example $s\requiv t\requiv t'$.
\begin{figure}[ht]
\begin{center}
\begin{picture}(40,50)(0,-10)
\gasset{AHangle=30,AHLength=3,AHlength=1.5} 
\node(A1)(20,30){$s$}
\node(A2)(0,20){$t$} 
\node(A3)(40,20){$t'$} 
\node[Nmarks=r](A4)(20,10){$f$}
\drawedge[ELside=r,ELpos=70](A1,A2){\scriptsize$(a,1/4),(b,1/4)$}
\drawedge[ELpos=70](A1,A3){\scriptsize$(a,1/4),(b,1/4)$}
\drawloop[loopangle=90](A1){\scriptsize$(a,1/4),(b,1/4)$}
\drawedge(A1,A4){\scriptsize$(b,1/4)$}
\drawedge[ELside=r](A1,A4){\scriptsize$(a,1/4)$}
\drawloop[loopangle=180](A2){\scriptsize$(a,1)$}
\drawloop[loopangle=0](A3){\scriptsize$(b,1)$}
\drawedge[ELside=r](A2,A4){\scriptsize$(b,1)$}
\drawedge(A3,A4){\scriptsize$(a,1)$}
\drawloop[loopangle=270](A4){\scriptsize$(a,1),(b,1)$}
\end{picture}\caption{Arena of Example~\ref{example:example1-halfgame}}\label{figure:example1-halfgame}
\end{center}
\end{figure}
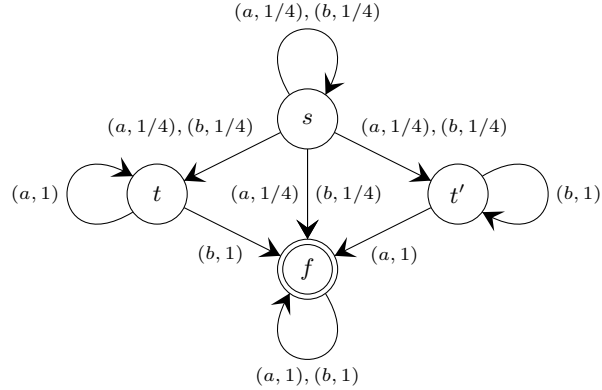

In this example $\wknowledge{}=\{\{t\},\{t'\}\}$. Note also that if one considers the corresponding knowledge game, the configurations $(t,\{t\})$ and $(t',\{t'\})$ are never visited in a play starting from $(s,\{s\})$. Nevertheless the player has a strategy that wins with probability $1/4$ starting from state $s$: the first action is to play randomly (with same probability) action $a$ or $b$, and then to play $a$ forever. The intuitive idea behind this strategy is rather simple: there are two safe states, $t$ and $t'$, that is states from which the player surely wins. Moreover those states can be possibly reached from $s$ (they belong to the same connected component) but this cannot be detected due to the equivalence relation $\requiv$. Hence the player bets that one of those states is reached (here state $t$) and behaves like if it is the case. Playing randomly on her first move is a way to ensure that with some positive probability (here $1/4$) state $t$ is reached. As from $t$ the player can surely wins, following the associated (knowledge-based memoryless) strategy, she is sure to win if her bet was correct: altogether it provides a strategy that is winning with probability $1/4$.
\finpreuve{Example~\ref{example:example1-halfgame}}
\end{example}

Define $\wstates=\{s\in\states\mid \{s\}\in\wknowledge{}\}$ and consider the following (increasing and bounded) sequence:
$$\begin{cases}
W_0 = \wstates\\
W_{i+1} = W_{i} \cup \{s\in\states\setminus F\mid \exists \action\in\actions \text{ and }t\in W_i \text{ s.t. }\delta(s,\action)(t)>0\}\\
\end{cases}
$$
Let $W$ be the limit of the sequence $(W_{i})_{i\geq 0}$: it consists exactly of those states from which the player has a strategy ensuring to reach, in at most $|\states|$ moves and without going through a final state, a state in $\wstates$ with some positive probability. Also note that the corresponding strategy is the one that plays with equal probability any action in $\actions$.

The states in $W$ are actually those from which the player has a positively winning strategy.

\begin{fact}\label{fact:poswinningstates1-2}
The following equality holds: $W=\poswstates$.
\end{fact}

\begin{proof}
The inclusion $W\subseteq \poswstates$ is rather immediate: from some state $s\in W$, the player should first play randomly on the first $k$ rounds (where $k$ is the smallest integer such that $s\in W_k$ in the previously defined sequence) and then play as if it was in some state $t\in\wstates$ (where $t$ is some reachable state from $s$ in $k$ moves accordingly to the definition of the sequence $(W_i)_{i\geq 0})$. This last step is done using an observation-based strategy that mimics the one coming with the construction of $\wknowledge{}$: the only trick here is that the player "reset" the knowledge to be $\{t\}$.

Consider now the converse inclusion: $\poswstates \subseteq W$. Let $s$ be some state in $\poswstates$ and assume, by contradiction, that $s\notin W$. In particular a play starting from $s$ will never go through a state in $\wstates$. We claim that playing accordingly to some knowledge-based strategy is almost-surely loosing for the player: indeed, as a consequence of the definition of $\wstates$ (and of $\wknowledge{}$), from a state $t\notin\wstates$, playing a knowledge-based strategy, the probability of visiting a final state in the next $2^{|\states|}$ moves is some $\epsilon>0$. Moreover such a play stays outside of $\wstates$ forever. Hence, using Borel-Cantelli Lemma, the probability that such a play never goes through a final state is $0$, meaning that it is almost-surely loosing, thus contradicting the assumption that $s\in \poswstates$. Hence, $s\in W$ which concludes the proof.
\finpreuve{Fact~\ref{fact:poswinningstates1-2}}
\end{proof}

Now one can conclude the proof of Lemma~\ref{lemma:safety1.5}: the existence of the ultimately knowledge-based memoryless strategy follows from Fact~\ref{fact:poswinningstates1-2}. Effectivity as well as complexity come from the constructive way of defining $W$ (and other intermediate objects) by means of fixpoint computations.
\finpreuve{Lemma~\ref{lemma:safety1.5bis}}
\end{proof}

\subsection{Proof of Theorem~\ref{theo:reachability}}

Fix, for the rest of this section, a concurrent game with imperfect information $\game= (\arena,s_0,\objective)$ equipped with a reachability objective $\objective$ defined from a set $\fstates$ of final states. We set $\arena = \tuple{\states,\actions_\Evei,\actions_\Adami,\ftrans,\requiv_\Evei,\requiv_\Adami}$. We also consider $\game^\Ki=(\arena^\Ki,(s_0,\{s_0\}),\objective^\Ki)$ to be the corresponding knowledge game.

To prove Theorem~\ref{theo:reachability}, one first defines (in a non constructive way) a know\-ledge-only uniform strategy $\phi$ for \Eve as follows. We let
\begin{multline*}
	\wknowledge{}=\{K\in 2^\states\mid \exists \varphi_\Evei \text{ knowledge-based strategy for \Eve s.t. } \varphi_\Evei \text{ is almost-}  \\ 
	\text{surely winning for \Eve in } \game^\Ki \text{ from any }(s,K) \text{ with } s\in K\}
\end{multline*}
be the set of knowledges made only by almost-surely winning states for \Eve (note here that we require that the almost-surely winning strategy is the same \emph{for all} states with the same knowledge). 

For every knowledge $K\in\wknowledge{}$ we define 
\begin{multline*}
	\allow{K}=\{\action_\Evei\in\actions_\Evei
	\mid \forall s\in K, \forall \action_\Adami\in\actions_\Adami,\\
	 \ftrans^\Ki((s,K),\action_\Evei,\action_\Adami)((s',K))>0 \Rightarrow K'\in\wknowledge{}\}
\end{multline*}
\begin{proposition}\label{proposition:allownonempty}
For every knowledge $K\in\wknowledge{}$, $\allow{K}\neq \emptyset$.
\end{proposition}

\begin{proof}
Consider some knowledge $K\in\wknowledge{}$ and assume by contradiction that $\allow{K}= \emptyset$. As $K\in\wknowledge{}$, there exists, by definition, some knowledge-based strategy  $\varphi_\Evei$ for \Eve in $\game^\Ki$ that is almost-surely winning from any state $(s,K)$ with $s\in K$. Strategy $\varphi_\Evei$ gives the same distribution for all partial plays of the form $(s,K)$ with $s\in K$ (\emph{i.e.} consisting of a single state indistinguishable by \Eve). Moreover there exists some action $\action_\Evei\in\actions_\Evei$ such that $\phi_\Evei(s,K)(\action_\Evei)>0$ and, as $\allow{K}= \emptyset$, there is some action $\action_\Adami\in\actions_\Adami$ and some configuration $(s',K')$ such that $\ftrans^\Ki((s,K),\action_\Evei,\action_\Adami)(s',K')>0$ and $K'\notin \wknowledge{}$. Consider the strategy $\phi'_\Evei$ for \Eve defined by $\phi'_\Evei(\lambda)=\phi_\Evei((s,K)\cdot \action_\Evei\cdot\lambda)$, \emph{i.e.} $\phi_\Evei'$ mimics strategy $\phi_\Evei$. In particular, as $K'\notin \wknowledge{}$, it means that there is some $s''\in K'$ such that $\phi_\Evei'$ is not almost-surely winning for \Eve in $\game^\Ki$ from $(s'',K')$. 
Now, by considering the way the knowledge is updated, one concludes that there is some state $s'''\in\states$ such that $\ftrans^\Ki((s''',K),\action_\Evei,\action_\Adami))(s'',K')>0$ (as $s''$ belongs to a knowledge updated from $K$). Now, if one considers the strategy of \Adam that plays $\action_\Adami$ and then mimics a counter strategy against $\phi'_\Evei$, it follows that this strategy ensures \Adam to win with positive probability while playing against $\phi_\Evei$ starting from $(s''',K)$. This contradicts our initial assumption on $\phi_\Evei$ being almost-surely winning from all configuration of the form $(s,K)$ with $s\in K$. Hence $\allow{K}\neq \emptyset$.
\qed\end{proof}

We now consider a (well-defined) knowledge-based uniform memoryless strategy $\phi$ for \Eve on $K\in\wknowledge{}$ by letting 
$$\phi(K)(\action_\Evei)=
\begin{cases}
\dfrac{1}{|\allow{K}|} & \text{if }\action_\Evei\in \allow{K}\\
0 & \text{otherwise}
\end{cases}$$

The next proposition shows that $\phi$ is almost-surely winning for \Eve.


\begin{proposition}\label{prop:uniformstratiswinning}
The strategy $\phi$ is almost-surely winning for \Eve from states whose \Eve's knowledge is in $\wknowledge{}$.
\end{proposition}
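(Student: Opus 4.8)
The plan is to show that $\phi$ is almost-surely winning from any configuration with knowledge in $\wknowledge{}$ by combining two facts: that $\phi$ keeps the play forever inside $\wknowledge{}$, and that from inside $\wknowledge{}$ a final state is reached with probability bounded away from zero within a bounded number of steps, so that by a Borel--Cantelli / geometric argument a final state is reached almost surely.

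First I would make precise the \emph{stability} property already announced before the proposition: for every $K\in\wknowledge{}$ and every configuration $(s,K,D)$ with $s\in K$, there is a nonempty set $\allow{(s,K,D)}$ of actions of \Eve (depending only on $K$) such that, whatever \Adam plays, all successor configurations $(s',K',D')$ satisfy $K'\in\wknowledge{}$; one gets this by taking the almost-surely winning knowledge-based strategy $\varphi$ witnessing $K\in\wknowledge{}$ and observing that the support of $\varphi$ at $K$ cannot lead with positive probability to a knowledge from which \Eve loses (otherwise, against the \Adam-strategy realising that transition, $\varphi$ would not be almost-surely winning — here the crucial point, and the reason $\wknowledge{}$ is defined with a \emph{single} strategy per knowledge, is that failure from the successor knowledge really is failure under $\varphi$ itself). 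Then $\phi$ is the uniform distribution over these safe actions, and by construction every play consistent with $\phi$ from a $\wknowledge{}$-configuration stays in $\wknowledge{}$ forever. Consequently, for any \Adam-strategy $\phi_\Adami$, $\proba{(s,K,D)}{\phi}{\phi_\Adami}$-almost every play visits only knowledges in $\wknowledge{}$.

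Next I would prove the quantitative reachability bound. Fix $K\in\wknowledge{}$ with witnessing strategy $\varphi=\varphi_K$. Since $\varphi$ is almost-surely winning, for every \Adam-strategy $\phi_\Adami$ the probability $\proba{(s,K,D)}{\varphi}{\phi_\Adami}$ of reaching $F^\Ki$ is $1$; by a standard compactness/König argument on the finitely-branching tree of the knowledge arena (or by finiteness of memoryless-equivalent behaviours), there is a uniform bound $N$ and a uniform constant $p>0$, depending only on $\arena$, such that against \emph{any} $\phi_\Adami$ a final state is reached within $N$ steps with probability at least $p$ when \Eve plays $\varphi$ — the point being that the infimum over the (compact set of) \Adam strategies of the probability of reaching $F^\Ki$ within $N$ steps tends to $1$ as $N\to\infty$, so it exceeds $0$ for some finite $N$. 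Now I would compare $\varphi$ and $\phi$: since $\phi$ plays \emph{uniformly} over a set of actions that includes (at each step, for that knowledge) a safe action, and in particular $\phi$ assigns positive probability to whatever action $\varphi$ would prescribe among the safe ones, a fixed finite prefix of a $\varphi$-play that reaches $F^\Ki$ has probability at least $q^N$ times its $\varphi$-probability under $\phi$, where $q = (\max_X 2^{|\actions_X|})^{-1}>0$ is a uniform lower bound on the probability $\phi$ puts on any action in its support. Hence, from any $\wknowledge{}$-configuration and against any $\phi_\Adami$, $\phi$ reaches $F^\Ki$ within $N$ steps with probability at least $\beta := p\cdot q^N > 0$. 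Care is needed here because $\phi$ might take a ``wrong'' safe action and leave the region where $\varphi_K$ is useful — but by the stability property the play then lands in some $K'\in\wknowledge{}$, and we simply reapply the argument with $\varphi_{K'}$; the bound $\beta$ is uniform over all of $\wknowledge{}$.

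Finally I would run the standard renewal argument: starting from a $\wknowledge{}$-configuration, split the play into blocks of $N$ steps; by the previous paragraph and the Markov property, conditioned on not yet having seen a final state, each block independently hits $F^\Ki$ with probability at least $\beta$, uniformly in \Adam's behaviour (this uses stability to guarantee the block again starts in $\wknowledge{}$). Therefore the probability of never visiting a final state is at most $\lim_{k\to\infty}(1-\beta)^k = 0$, so $\proba{(s,K,D)}{\phi}{\phi_\Adami}(\objective^\Ki)=1$ for every $\phi_\Adami$. Pulling back through the bijection of strategies and the probability-preserving correspondence of Proposition~\ref{prop:uniformstratiswinning}'s preceding proposition gives the statement in $\game$. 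The main obstacle I anticipate is the second step: justifying that a \emph{single} finite horizon $N$ and positive probability $p$ work uniformly against \emph{all} counter-strategies of \Adam, and then transferring this from $\varphi$ to the uniform strategy $\phi$ while handling the fact that $\phi$ may wander among different knowledges in $\wknowledge{}$ rather than following one fixed $\varphi_K$; the stability property and the uniform lower bound $q$ on $\phi$'s action probabilities are what make this go through.
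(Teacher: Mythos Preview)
Your approach is genuinely different from the paper's, and the transfer step from $\varphi_K$ to $\phi$ has a real gap.

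The paper does \emph{not} attempt a direct Borel--Cantelli bound. Instead it observes that once the knowledge-only strategy $\phi$ is fixed, what remains is a \onehalf-player \emph{safety} game from Adam's point of view (he tries to avoid $F$). Proving that $\phi$ is almost-surely winning is then equivalent to proving that Adam cannot positively win this safety game. The paper now invokes Lemma~\ref{lemma:safety1.5}: if Adam could positively win, he could do so with a \emph{finite-memory} strategy. Against such a strategy the product with $\phi$ is a finite Markov chain, and one concludes from the definition of $\wknowledge{}$. All the difficulty of quantifying over arbitrary Adam strategies is thus absorbed by Lemma~\ref{lemma:safety1.5}; no compactness and no explicit $N$--$p$ bound are needed.

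In your route, the Dini/compactness step for each $\varphi_K$ is defensible, but the comparison ``a fixed finite prefix of a $\varphi$-play has probability at least $q^N$ times its $\varphi$-probability under $\phi$'' does not go through. In the knowledge arena a state is a triple $(s,K,D)$, and the successor knowledge is $\updateK{}(K,[s'],D)$, which depends on the \emph{domain} $D$ of Eve's last distribution. Since $\phi$ and $\varphi_K$ have different supports at a given knowledge, after one step they produce different $D$-components and hence different successor knowledges: a $\varphi_K$-play prefix in $\arena^\Ki$ typically has probability $0$ under $\phi$, not $\geq q^N$ times its $\varphi_K$-probability. Working in the original arena does not help either: along a state sequence $s_0\cdots s_n$, the knowledge computed by $\phi$ and the knowledge computed by $\varphi_K$ diverge from step $1$ onward, so there is no reason why the action that $\varphi_K$ prescribes at \emph{its} current knowledge should lie in the support of $\phi$ at \emph{$\phi$'s} (different) current knowledge. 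Your proposed patch (``reapply with $\varphi_{K'}$'') restarts the $N$-step clock at every deviation and therefore never yields the uniform $N$-step bound for $\phi$ that your renewal argument requires. To rescue this line you would need additional structural facts (for instance a monotonicity of $\wknowledge{}$ under inclusion of knowledges, together with a compatible monotonicity of the safe-action sets), none of which you establish; the paper avoids all of this by reducing to finite-memory Adam via Lemma~\ref{lemma:safety1.5}.
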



\begin{proof}

In order to prove that $\phi$ is almost-surely winning one needs to show that it is almost-surely winning against \emph{any} observation-based strategy $\phi_\Adami$ of \Adam, \emph{i.e.} $\proba{(s_0,\{s_0\})}{\phi}{\phi_\Adami}(\objective)=1$. Note that once $\phi$ is fixed, and as it is a knowledge-based memoryless strategy, it induces a new game, denoted $\game_\phi$, where only \Adam makes choices, and where $\phi$'s choices are now part of the stochastic aspect of the game. More formally, $\game_\phi$ is the \onehalf-player \emph{saftey} game $\game_\phi=(\arena_\phi,\requiv_\Adami^\Ki,(s_0,\{s_0\}),\objective)$ where $\arena_\phi=\langle \states^\Ki,\actions_\Adami,\ftrans_\phi,\fstates^\Ki\rangle$ with
\begin{itemize}
\item 
$\ftrans_\phi((s,K),\action_\Adami)(s',K')= \displaystyle\sum_{\action_\Evei\in\actions_\Evei}\delta^\Ki((s,K),\action_\Evei,\action_\Adami)(s',K').\phi(K)(\action_\Evei)$
\end{itemize}

As a strategy for \Adam in $\game$ (equivalently in $\game^\Ki$) can be seen as well as a strategy in game $\game_\phi$ and \emph{vice versa}, while preserving the value of the game (against $\phi$ in $\game$), one derives the following fact.

\begin{fact}Strategy $\phi$ is almost-surely winning in $\game^\Ki$ if and only if the player has no positively winning observation-based strategy in $\game_\phi$.
\end{fact}

As $\game_\phi$ is a \onehalf-player safety concurrent game with imperfect information, one can use the previous results. In particular Lemma~\ref{lemma:safety1.5bis} implies that in order to prove Proposition~\ref{prop:uniformstratiswinning}, it suffices to prove that $\phi$ is winning against any strategy in $\game^\Ki$ that is obtained by mimicking an ultimately memoryless knowledge-based strategy of the player in $\game_\phi$\footnote{An ultimately memoryless knowledge-based strategy of the player in $\game_\phi$ is a finite memory strategy that uses as a memory a set of integers $\{1,\cdots,k\}$ (for the initial part) together with a set of knowledge. Note that a knowledge $\game_\phi$ is a subset of states, \emph{i.e.} is a subset of pairs made of a control state $s\in\states$ and of a subset $K\subseteq \states$ that represents a knowledge of \Eve in the previous game $\game$ (hence \Adam is not only computing the possible states he can be in but he is also computing the set of knowledges \Eve can have about the play. This somehow proves that to positively wins against $\phi$ he would use doubly exponential size memory). Nevertheless the only important thing is that it is a finite memory strategy, and that it can hence be translated in another finite memory strategy in $\game^\Ki$.}. 
As those strategies are finite memory strategies one derives the following fact.

\begin{fact}Strategy $\phi$ is almost-surely winning in $\game^\Ki$ if and only if it is almost-surely winning against any finite-memory observation-based strategy of \Adam.
\end{fact}

Fix such a finite memory strategy $\psi=(Move,Up,m_0)$ for \Adam (let $M$ be the finite memory used here). Now from $M$, $Up$ and $\arena^\Ki$ one can construct a new arena whose set of states is $S^\Ki\times M$ and where the $M$ component is updated accordingly to $Up$: the idea is just to make explicit the value of the memory at any stage and to update it explicitly in the arena. Next, if one modifies the equivalence relation of \Adam $\requiv_\Adami$ to only distinguish between those configurations that have a different memory content, and if one modifies $\requiv_\Evei$ so that \Eve has no information on the $M$ component, one obtains a new game in which any finite-memory strategy of \Adam with memory $M$ and update function $Up$ in the previous game is transformed into a memoryless observation-based strategy for him.

More formally, this leads to consider the arena \\ $\arena^\Ki_{Up}=\langle S^\Ki\times M,\actions_\Evei,\actions_\Adami,\delta^\Ki_{Up},\fstates^\Ki\times M \rangle$ where \\
$\delta^\Ki_{Up}((s,K,m),\sigma_\Evei,\sigma_\Adami))(s',K',m')=0$ if $m'\neq Up(m,[s']_{/_{\requiv_\Adami}},\action_\Adami)$ and\\
 $\delta^\Ki_{Up}((s,K,m),\sigma_\Evei,\sigma_\Adami))(s',K',m')=\delta^\Ki((s,K),\sigma_\Evei,\sigma_\Adami))(s',K')$ otherwise.

The new equivalence relations are given by $(s,K,m)\equiv_\Evei(s',K',m')$ iff $(s,K)\requiv^\Ki_\Evei(s',K')$ and $(s,K,m)\equiv_\Adami(s',K',m')$ iff $m=m'$. Let $\game^\Ki_{Up}$ be this new game. Any strategy for \Eve in $\game^\Ki$ can be seen as a strategy in $\game^\Ki_{Up}$ and vice versa.

Hence from now on we may only work in $\game^\Ki_{Up}$ and assume that $\psi$ is a memoryless knowledge-based strategy for \Adam and our goal is to prove that $\phi$ almost-surely wins against $\psi$ from any configuration in $\{(s,K,m_0) \mid K\in\wknowledge{} \text{ and }s\in K\}$. Actually, one will prove a slightly stronger result, namely that $\phi$ almost-surely wins against $\psi$ from any configuration in $\{(s,K,m) \mid m\in M,\ K\in\wknowledge{} \text{ and }s\in K\}$

We will first define an increasing sequence of subsets of almost-surely winning positions for \Eve in $\game^\Ki_{Up}$ and later we will prove that its limit is the set of all positions with a knowledge in $\wknowledge{}$ and that $\phi$ is actually an almost-surely winning strategy from those positions.
 For some configuration $(s,K,m)$ and some action $\action_\Evei\in\actions_\Evei$ and some distribution of actions $d$ in $\distribution{\actions_\Adami}$, we define $\post{\sigma_\Evei}{d}{(s,K,m)}$ as the set of all possible next states when \Eve plays $\action_\Evei$ and \Adam picks an action according to $d$ from $(s,K,m)$:
\begin{multline*}
\post{\sigma_\Evei}{d}{(s,K,m)}=\{(s',K',m')\mid \exists \action_\Adami \text{ s.t. } d(\action_\Adami)>0 \\ 
\text{ and } \delta^\Ki((s,K,m),\action_\Evei,\action_\Adami)((s',K',m'))>0\}
\end{multline*}
 Consider the following increasing sequence $(\rank{i})_{i\geq 0}$:
$\rank{0}=2^{F^\Ki}\times M$ consists of trivially winning positions and
 \begin{multline*}
\rank{i+1} = \rank{i} \cup
 \{(s,K,m)\mid K\in \wknowledge{} \\ 
 \text{ and }\exists\ \action_\Evei\in\allow{K}\text{ s.t. } \post{\action_{\Evei}}{Move(m)}{s,K,m}\cap \rank{i}\neq \emptyset\}
\end{multline*}

Let us denote by $\ranketoile$ the limit of the sequence $(\rank{i})_{i\geq 0}$. We claim that $\ranketoile=\{(s,K,m)\mid K\in\wknowledge{}\}$ and that $\phi$ is an almost-surely winning strategy for \Eve from those positions when \Adam plays accordingly to $\psi$.

The inclusion $\ranketoile\subseteq \{(s,K,m)\mid K\in\wknowledge{}\}$ is forced by the definition of $\ranketoile$. 
The fact that $\phi$ is an almost-surely winning strategy for \Eve from positions in $\ranketoile$ when \Adam plays accordingly to $\psi$ is a simple consequence of how $(\rank{i})_{i\geq 0}$ is defined and of Borel-Cantelli Lemma. Indeed from any configuration in $\rank{i}$, there is a non null probability to reach a final state in the next $i$ moves while playing $\phi$ against $\psi$ and moreover the play surely stays inside $\ranketoile$ while playing $\phi$ against $\psi$ (at least until some final state is visited): hence the probability, that a play starting from $\ranketoile$, in which \Eve follows $\phi$ and \Adam follows $\psi$, to never reach $F^\Ki\times M$ is null. 

In order to prove the other inclusion, we let $X=\{(s,K,m)\mid K\in\wknowledge{}\}\setminus\ranketoile$ and assume by contradiction that $X\neq\emptyset$. By definition, for any element $(s,K,m)\in X$ we have that $\forall \action_\Evei\in\allow{K}$, $\post{\action_{\Evei}}{Move(m)}{s,K,m}\subseteq X$. This means in particular that following $\phi$ from such a configuration, and if \Adam plays accordingly to $\psi$, then \Eve surely looses as the play surely stays in $X$ and $X\cap 2^{F^\Ki}\times M=\emptyset$. Now we claim that the same holds if one replaces $\phi$ by any almost-surely winning strategy for \Eve, leading to a contradiction. Indeed consider an almost-surely winning strategy $\phi^\AS$ for \Eve. Then we claim the following fact.

\begin{fact}\label{property:stayinallow} Let $\lambda$ be a partial play consisting only of configurations in $X$. Assume that, for some strategy $\psi'$ of \Adam, $\lambda$ is a possible partial play accordingly to both $\phi^\AS$ and $\psi'$ (more formally, $\proba{(s,K,m)}{\phi}{\psi'}(\cone{\lambda})>0$ where $(s,K,m)$ denotes the initial configuration of $\lambda$). Then for any action $\action_\Evei\in\actions_\Evei$, one has $\phi^\AS(\lambda)(\action_\Evei)>0$ only if $\action_\Evei\in\allow{K'}$ where $K'$ denotes \Eve's knowledge in the last configuration of $\lambda$.
\end{fact}

\begin{proof}
The proof is by contradiction. Consider some $\lambda,\phi^\AS,\psi,\action_\Evei$ violating the property. Then one can find an equivalent play $\lambda'\requiv_\Evei \lambda$ such that $\lambda'$ ends in a configuration $(s',K',m')$ and there is an action $\action_\Adami\in\actions_\Adami$ such that\\ $\ftrans^\Ki((s',K',m'),\action_\Evei,\action_\Adami)((s'',K'',m''))>0$ for some $K''\notin \wknowledge{}$ (existence of $\lambda'$ follows by the construction of the knowledge arena). Now consider the strategy of \Adam that first mimics $\psi$ and then after $|\lambda'|$ moves plays $\action_\Adami$ and then plays accordingly to a strategy ensuring from $(s'',K'',m'')$ that \Eve's does not surely wins (such a strategy exists as $K''\notin \wknowledge{}$): then against this strategy $\phi^\AS$ is not almost-surely winning, leading a contradiction
\finpreuve{Fact \ref{property:stayinallow}}
\end{proof}

Now one is ready to conclude. Assume \Adam plays accordingly to $\psi$ and \Eve plays accordingly to some almost-surely winning strategy $\phi^\AS$. Then it follows from Fact \ref{property:stayinallow} and definition of $X$ that a play starting in $X$ stays forever in $X$, hence never visits $F^\Ki\times M$ and contradicting the hypothesis that $\phi^\AS$ is almost-surely winning. Therefore $X=\emptyset$, which concludes the proof of 
Proposition~\ref{prop:uniformstratiswinning}.\qed
\end{proof}

Now one can prove Theorem~\ref{theo:reachability}. First \Eve almost-surely wins in $\game$ if and only if she almost-surely wins in $\game^\Ki$ if and only if $\{s_0\}\in \wknowledge{}$, \emph{i.e.} (using Proposition \ref{prop:uniformstratiswinning}) if and only if \Eve has a knowledge-only uniform strategy in $\game^\Ki$. Now, to decide whether \Eve almost-surely wins in $\game$, it suffices to check, for any possible knowledge-only uniform strategy $\phi$ for her, whether it is almost-surely winning. Once $\phi$ is fixed, it leads, from Adam's point of view, to a \onehalf-player safety game $\game_\phi$ where the player positively wins if and only if $\phi$ is not almost-surely winning.
Hence Lemma \ref{lemma:safety1.5} implies that deciding whether $\phi$ is almost-surely winning can be done in time exponential in the size of $\game_\phi$, which itself is of exponential size in $|\states|$. Hence deciding whether a knowledge-only uniform strategy for \Eve is winning can be done in doubly exponential time (in the size of $|\states|$). The set of knowledge-only uniform strategies for \Eve is finite and its size is doubly exponential in the size of the game. Hence the overall procedure, that tests every possible such strategies, requires doubly exponential time. As effectivity is immediate, this concludes the proof of Theorem~\ref{theo:reachability}.

\subsection{Complexity Lower Bound}

The naive underlying algorithm of Theorem~\ref{theo:reachability} turns out to be optimal.

\begin{theorem}\label{theo:lowerbound}
Deciding whether \Eve almost-surely wins a concurrent game with imperfect information is a $2$-\textsc{ExpTime}-complete problem.
\end{theorem}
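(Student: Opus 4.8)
The membership in $2$-\textsc{ExpTime} is already delivered by Theorem~\ref{theo:reachability} (the naive algorithm enumerating knowledge-only uniform strategies and checking each via Lemma~\ref{lemma:safety1.5} runs in doubly exponential time), so the real content here is the matching lower bound: I must exhibit a polynomial-time reduction from some known $2$-\textsc{ExpTime}-hard problem to the almost-sure reachability problem for concurrent stochastic games with imperfect information. The natural source of hardness is the word problem for alternating Turing machines with exponential work-tape, equivalently (and more convenient for a direct encoding) the membership problem for $2$-\textsc{ExpTime}-hard games with imperfect information that already exist in the literature; in fact the cleanest route is to reduce from the almost-sure (or even sure) winning problem for \emph{turn-based} two-player games with imperfect information on \emph{one side}, which is known to be $2$-\textsc{ExpTime}-complete (Reif~\cite{Reif84}, and the concurrent one-sided stochastic version in \cite{CDHR07}). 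Since our model strictly subsumes those --- a turn-based arena is the special case where in each state only one player has a nontrivial choice, and one-sided imperfect information is the special case where $\requiv_\Adami$ is equality --- the reduction is essentially an embedding, and the only thing to verify is that the semantics of almost-sure winning coincide on the embedded instances.

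Concretely, I would proceed as follows. First, fix the source problem: deciding whether \Eve has an observation-based surely-winning strategy in a turn-based reachability game where \Eve has imperfect information and \Adam has perfect information; this is the classical $2$-\textsc{ExpTime}-complete problem, the hardness going back to the automata-theoretic encoding of alternating exponential-space machines (the knowledge/subset construction blows the state space up by one exponential, and the resulting perfect-information game needs another exponential, and one shows this is unavoidable). Second, show how to view such an instance as a concurrent stochastic game with imperfect information in our sense: keep the same state set, let \Adam's equivalence relation be equality, let the transition function $\ftrans$ be deterministic (every distribution a Dirac), and in \Eve-states give \Adam a single dummy action while in \Adam-states give \Eve a single dummy action. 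Third --- and this is the step needing care --- argue that in this deterministic, turn-based fragment, \emph{almost-sure} winning for a reachability objective collapses to \emph{sure} winning, because with deterministic transitions the probability measure $\proba{s_0}{\phi_\Evei}{\phi_\Adami}$ of any cone is $0$ or $1$, so $\proba{s_0}{\phi_\Evei}{\phi_\Adami}(\objective)=1$ for every counter-strategy iff every outcome reaches $F$ iff \Eve wins surely; moreover a randomised observation-based strategy that wins almost-surely can be purified to a deterministic one (support-picking), so randomisation does not help \Eve here. Together these give that the source instance is positive iff its image under the reduction is a positive instance of our problem, and the reduction is clearly polynomial (indeed linear).

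The main obstacle is not the combinatorics of the encoding --- that is routine --- but pinning down precisely which $2$-\textsc{ExpTime}-hard problem to start from and making the correspondence of winning conditions airtight, in particular the claim that almost-sure winning in the deterministic turn-based fragment is the same as classical sure winning; one has to be a little careful that \Adam, even with a single action per state, still ranges over all (observation-based) strategies, so that ``against any counter-strategy'' really does quantify over all of \Adam's choices in the \emph{original} turn-based game, and that the $\sigma$-algebra and measure from Section~\ref{subsection:proba} degenerate as claimed. Once that bridge is built, the theorem follows: the upper bound from Theorem~\ref{theo:reachability}, the lower bound from the reduction. I would also remark that since reachability reduces trivially to B\"uchi (make every final state absorbing and final, or add a self-loop), the same hardness and hence $2$-\textsc{ExpTime}-completeness transfers to the B\"uchi case, matching the upper bound obtained for B\"uchi objectives.
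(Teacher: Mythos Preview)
Your proposal has a genuine gap in the lower-bound argument. You plan to reduce from the sure-winning problem for turn-based reachability games with imperfect information on \emph{one side only} (\Eve imperfectly informed, \Adam perfectly informed), citing Reif~\cite{Reif84} and \cite{CDHR07}. But that problem is \textsc{ExpTime}-complete, not $2$-\textsc{ExpTime}-complete: the subset/knowledge construction yields a perfect-information reachability game of exponential size, and such games are solved in polynomial time in their own size, so the whole pipeline sits in \textsc{ExpTime}. The paper itself says explicitly that \cite{CDHR07} establishes \textsc{ExpTime}-hardness for the one-sided case. Your sentence ``the knowledge/subset construction blows the state space up by one exponential, and the resulting perfect-information game needs another exponential'' is the slip --- the second exponential is not there. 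Consequently your embedding, while perfectly sound as a reduction, only certifies \textsc{ExpTime}-hardness and leaves an exponential gap with the upper bound.

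The paper takes a different, and here necessary, route: a direct reduction from acceptance of an alternating \emph{exponential-space} Turing machine (recall $\textsc{AExpSpace}=2\textsc{-ExpTime}$). The crucial point is that the construction exploits imperfect information on \emph{both} sides. \Adam spells out successive configurations; \Eve secretly marks a tape cell to be checked later --- but since cell indices are of exponential length she cannot store the full index, so \Adam in turn secretly marks one bit position of that index, and the game records only that single bit. When \Eve later claims a mismatch, her claimed index is compared to \Adam's recorded bit; if inconsistent she loses. \Eve may also restart, so if the machine accepts she almost-surely catches a cheating \Adam eventually, while if it rejects any cheating by \Eve is caught with positive probability. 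It is precisely this two-level secret-marking --- \Eve hides a cell, \Adam hides a bit of its address --- that buys the second exponential; a one-sided reduction cannot reproduce it. If you want to repair your plan, you must either reproduce this two-sided gadget or reduce from some other problem already known to be $2$-\textsc{ExpTime}-hard and encodable using imperfect information for \emph{both} players.
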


\begin{proof}[sketch]
The proof is a generalisation of a similar result given in \cite{CDHR07} showing \textsc{ExpTime}-hardness of concurrent games \emph{only one} player is imperfectly informed. The idea is to simulate an alternating exponential space Turing machine (without input). We design a game where the players describe the run of such a machine: transitions from existential (\emph{resp.} universal) states are chosen by \Eve (\emph{resp.} Adam) and \Adam is also in charge of describing the successive configurations of the machine. To prevent him from cheating, \Eve can secretly mark a cell of the tape, and latter check whether it was correctly updated (if not she wins). As she cannot store the exact index of the cell (it is of exponential size), she could cheat in the previous phase: hence \Adam secretly marks some bit and one recall the value of the corresponding bit of the index of the marked cell: this bit is checked when \Eve claims that \Adam cheated (if it is wrong then she is loosing). \Eve also wins if the described run is accepting. \Eve can also restart the computation whenever she wants (this is useful when she cannot prove that \Adam cheated): hence if the machine accepts the only option for \Adam is to cheat, and \Eve will eventually catch him with probability one. Now if the machine does not accept, the only option for \Eve is to cheat, but it will be detected with positive probability.\qed
\end{proof}

\section{B\"uchi Objectives}

We now consider the problem of deciding whether \Eve almost-surely wins a B\"uchi game and establish the following result. 

\begin{theorem}\label{theo:buchi}
For any B\"uchi concurrent game with imperfect information, one can decide, in doubly exponential time, whether \Eve has an almost-surely winning strategy. If \Eve has such a strategy then she has a knowledge-based uniform memoryless strategy, and such a strategy can be effectively constructed.
The doubly exponential time complexity bound is optimal.
\end{theorem}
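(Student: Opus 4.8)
The plan is to follow, step by step, the treatment of Theorem~\ref{theo:reachability}, using Lemma~\ref{lemma:coBuchi1.5} wherever the reachability argument used Lemma~\ref{lemma:safety1.5}. First I would move to the knowledge game $\game^\Ki$ (legitimate by the proposition relating $\game$ and $\game^\Ki$, which applies verbatim to the B\"uchi objective) and introduce
\begin{multline*}
  \wknowledge{}=\{K\in 2^\states\mid \exists\, \varphi_\Evei \text{ knowledge-based strategy for \Eve that is}\\
  \text{almost-surely winning for the B\"uchi objective in } \game^\Ki \text{ from every }(s,K,D)\\
  \text{ with }s\in K \text{ and } D\subseteq\actions_\Evei\}.
\end{multline*}
Exactly as in the reachability case one checks that from a configuration whose knowledge $K$ lies in $\wknowledge{}$, \Eve has at least one move keeping the knowledge inside $\wknowledge{}$ (any move prescribed by a witnessing knowledge-based almost-surely winning strategy works, since the knowledge update $\updateK{}$ depends only on the domain of the move, not on its precise distribution), and one lets $\phi$ be the knowledge-based uniform memoryless strategy that plays, at knowledge $K$, uniformly at random among such \emph{safe} moves. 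By construction $\phi$ depends only on the current knowledge, hence is memoryless on $\game^\Ki$ and is implemented in $\game$ by a finite transducer whose control states are the knowledges — this is the kind of strategy announced.

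The core of the proof is the B\"uchi analogue of Proposition~\ref{prop:uniformstratiswinning}: that \emph{$\phi$ is almost-surely winning for \Eve from every configuration whose knowledge belongs to $\wknowledge{}$}. Fixing $\phi$ turns $\game^\Ki$ into a \onehalf-player game in which only \Adam chooses, equipped with a co-B\"uchi objective for him, and $\phi$ fails to be almost-surely winning precisely when \Adam positively wins this co-B\"uchi game. By Lemma~\ref{lemma:coBuchi1.5} it is then enough to rule out a \emph{finite-memory} positively winning strategy $\tau$ of \Adam. Against such a $\tau$, the pair $(\phi,\tau)$ induces a finite Markov chain on $\states^\Ki\times M$ (with $M$ the memory of $\tau$); almost every play settles in a bottom strongly connected component $B$, all of whose configurations have their knowledge in $\wknowledge{}$ because $\phi$ never leaves $\wknowledge{}$. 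Using a (uniform) knowledge-based almost-surely winning strategy available on $\wknowledge{}$ — whose existence, together with the fact that almost-sure winning forces $\fstates^\Ki$ to be reached within a bounded number of steps with a uniformly positive probability (finiteness of $\states^\Ki$), is part of the ``extra intermediate results'' — one shows that from any configuration of $B$ a final state would be reached within a bounded horizon with positive probability; since $B$ is absorbing this forces $B\cap\fstates^\Ki\neq\emptyset$, hence almost every play visits $\fstates^\Ki$ infinitely often, contradicting that $\tau$ positively wins the co-B\"uchi objective. I expect this to be the main obstacle: the reachability argument only needs to keep the play inside $\wknowledge{}$, whereas here one must additionally guarantee \emph{infinitely many} visits to $\fstates$, which is exactly why Lemma~\ref{lemma:coBuchi1.5} (whose winning region is obtained through connectivity to a \emph{surely} winning state along paths of arbitrary kind, not merely along non-final ones) is the right tool, and why a careful end-component / bottom-SCC analysis cannot be avoided.

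The decision procedure and the complexity bound then follow as for reachability. One argues that \Eve almost-surely wins $\game$ iff she almost-surely wins $\game^\Ki$ iff $\{s_0\}\in\wknowledge{}$ iff the strategy $\phi$ above is almost-surely winning from $(s_0,\{s_0\},\emptyset)$ (the non-trivial implications using the proposition of the previous paragraph and the characterisation of $\wknowledge{}$). To decide this, enumerate every knowledge-based uniform memoryless strategy of \Eve on $\game^\Ki$; there are $(2^{|\actions_\Evei|}-1)^{|\states^\Ki|}$ of them, which is doubly exponential in $|\states|$ because $|\states^\Ki|$ is exponential in $|\states|$. For each candidate $\phi$, fixing it yields, from \Adam's point of view, a \onehalf-player co-B\"uchi game $\game_\phi$ of size exponential in $|\states|$, and \Adam positively wins $\game_\phi$ iff $\phi$ is not almost-surely winning; by Lemma~\ref{lemma:coBuchi1.5} this is decided in time $\mathcal{O}(2^{|\game_\phi|})$, i.e. doubly exponential in $|\states|$. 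Hence the whole enumeration runs in doubly exponential time, the construction of a winning $\phi$ is effective, and transporting $\phi$ back along the bijection between the observation-based strategies of $\game^\Ki$ and those of $\game$ produces the promised knowledge-based uniform memoryless strategy.

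It remains to see that the doubly exponential bound is optimal, i.e. that the almost-sure B\"uchi problem is $2$-\textsc{ExpTime}-hard. This I would get by a trivial polynomial reduction from the almost-sure reachability problem, which is already $2$-\textsc{ExpTime}-hard by Theorem~\ref{theo:lowerbound}: given a reachability concurrent game with imperfect information, add a self-loop on every final state so that final states become absorbing (the equivalence relations being unchanged); then a play visits $\fstates$ infinitely often iff it ever visits $\fstates$, so \Eve almost-surely wins the resulting B\"uchi game iff she almost-surely wins the original reachability game. Combined with the upper bound this yields $2$-\textsc{ExpTime}-completeness, hence the optimality of the complexity bound.
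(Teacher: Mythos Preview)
Your proposal is correct and follows exactly the approach the paper indicates: mirror the reachability argument (Proposition~\ref{prop:uniformstratiswinning} and the enumeration of knowledge-only uniform strategies) with Lemma~\ref{lemma:coBuchi1.5} replacing Lemma~\ref{lemma:safety1.5}. In fact the paper gives no proof beyond that one-line hint and a reference to ``extra intermediate results'', so your write-up --- in particular the bottom-SCC argument against a finite-memory \Adam and the explicit reduction from almost-sure reachability (making final states absorbing) for the lower bound --- is more detailed than what appears in the paper while remaining faithful to its intended route.
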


The results and techniques are similar to the one for reachability games.
In particular, we need first to establish an intermediate result about positive winning in \onehalf-player co-B\"uchi game with imperfect information.

\subsection{Positively winning in \onehalf-player co-B\"uchi game with imperfect information}

We have the following lemma whose statement and proof are very similar to the one of Lemma~\ref{lemma:safety1.5} except that now the winning states are those connected by \emph{any kind} of path to a surely winning state).

\begin{lemma}\label{lemma:coBuchi1.5}
Consider an \onehalf-player co-B\"uchi game with imperfect information. Assume that the player has an observation-based strategy that is positively winning.
Then she also has an observation-based finite memory strategy that is positively winning.
Moreover, both the strategy and the set of positively winning states can be computed in time $\mathcal{O}(2^{|\states|})$.
\end{lemma}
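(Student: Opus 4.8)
The plan is to mimic the proof of Lemma~\ref{lemma:safety1.5}, replacing the notion of "surely winning for safety" with "surely winning for co-B\"uchi" and, crucially, weakening the connectivity condition that characterises positively winning states. First I would pass to the knowledge arena $\arena^\Ki$ (which for a \onehalf-player game has size $\mathcal{O}(2^{|\states|})$) and observe, as in the safety case, that the player's knowledge evolves deterministically, so a knowledge-based strategy is a memoryless strategy on $\arena^\Ki$. Call a knowledge $K$ \emph{surely winning} (for co-B\"uchi) if the player has a knowledge-based strategy that is surely winning from every $(s,K,D)$ with $s\in K$; and call a state $s$ surely winning if the singleton $\{s\}$ is. The co-B\"uchi surely winning region of a \onehalf-player game with perfect information can be computed by a standard nested fixpoint (an attractor-style computation: iteratively remove knowledges from which the opponent — here, the non-trivial player — can force a visit to a final state or escape the candidate region), which runs in time polynomial in $|\arena^\Ki|$, hence $\mathcal{O}(2^{|\states|})$ overall; this yields a memoryless surely winning strategy $\sigma_{\mathrm{sure}}$ on the surely winning region.

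The heart of the argument is the characterisation of the positively winning knowledges. I claim a knowledge (or state) is positively winning for co-B\"uchi if and only if it is connected, in the underlying graph of $\arena^\Ki$ using the player's own moves together with the adversary's choices restricted appropriately, by \emph{some} finite path to a surely winning knowledge — note the contrast with Lemma~\ref{lemma:safety1.5}, where the connecting path had to avoid final states; here no such restriction is needed, because after reaching the surely winning region and switching to $\sigma_{\mathrm{sure}}$ only finitely many final states have been seen on the finite prefix, which is harmless for co-B\"uchi. For the "if" direction one builds the positively winning strategy: play uniformly at random over all available actions for a bounded number of steps (long enough that, against any adversary strategy, the surely winning region is reached with probability bounded below by some $p>0$), and upon detecting — from the evolving knowledge — that the current knowledge is surely winning, switch permanently to $\sigma_{\mathrm{sure}}$; since each "attempt" independently succeeds with probability at least $p$, the play reaches the surely winning region almost surely, and from there wins surely, so the co-B\"uchi objective is satisfied with probability $1>0$. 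This strategy is finite-memory: its memory records the current knowledge plus a bounded counter for the randomisation phase. For the "only if" direction one argues contrapositively: if a knowledge has no finite path to the surely winning region, then from it the adversary has a strategy forcing the play to stay forever outside the surely winning region, and within that trap — where by definition the player cannot surely win — the adversary can force a visit to a final state from \emph{every} reachable configuration, hence can force infinitely many visits, so with probability $1$ the co-B\"uchi objective fails against that adversary strategy; thus the knowledge is not positively winning.

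The main obstacle I expect is the "only if" direction, specifically justifying that outside the surely winning region the adversary can in fact force \emph{infinitely many} final visits with probability one (not merely one visit). This requires care: one must show that the complement of the co-B\"uchi surely winning region is itself closed under the adversary's ability to re-visit final states — essentially that the fixpoint computation which defines the surely winning region leaves behind exactly a region on which the adversary has a (positional) strategy guaranteeing the B\"uchi objective (infinitely many final visits) surely. This is the determinacy/complementation fact for \onehalf-player (perfect information) B\"uchi/co-B\"uchi games on the knowledge arena, and it is where the nested fixpoint structure genuinely matters, as opposed to the single attractor used in the safety case. Once this is in hand, the complexity bound is immediate from the cost of the fixpoint over $\arena^\Ki$, and the finite-memory bound from the explicit description of the strategy above. \qed
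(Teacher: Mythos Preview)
Your plan follows the paper's approach closely: pass to the knowledge arena, isolate a surely winning region, and characterise positively winning states as those connected to it by an \emph{arbitrary} path (whereas for safety the path had to avoid final states). That is exactly what the paper does.

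There is, however, one substantive deviation that creates the very obstacle you flag. The paper does \emph{not} introduce a new ``surely winning for co-B\"uchi'' region; it reuses the \emph{safety} surely winning region from Lemma~\ref{lemma:safety1.5}. With that choice the ``only if'' direction is almost free: if the player positively wins co-B\"uchi from $s$, then with positive probability the play has a suffix avoiding $F$ entirely, so some reachable configuration is positively winning for \emph{safety}; now Lemma~\ref{lemma:safety1.5} applies verbatim and yields a (non-final) path to a safety-surely-winning state, hence $s$ is connected (by some path) to that state. Your attempt to argue the ``only if'' direction directly---``outside the surely winning region the adversary can force infinitely many $F$-visits with probability~$1$''---does not go through as stated: in a \onehalf-player game there is no strategic adversary, only fixed randomness, and ``not surely winning co-B\"uchi'' (some outcome hits $F$ infinitely often) is far weaker than ``co-B\"uchi fails with probability~$1$''. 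The determinacy fact you invoke lives on the two-player game obtained by making randomness demonic, and its conclusion (the demon has a winning strategy for B\"uchi) says nothing about what happens under the \emph{actual} probabilities.

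Two smaller points. In the ``if'' direction your repeated-attempts argument for \emph{almost-sure} reaching of the surely winning region is incorrect: after a failed attempt the current knowledge need not be connected to the region any more, so attempts are not renewable. Fortunately you only need \emph{positive} probability of reaching it, which is immediate from the existence of a single connecting path, and then switching to the surely winning (safety) strategy gives positive probability for co-B\"uchi. Finally, using the safety region also keeps the fixpoint computation identical to Lemma~\ref{lemma:safety1.5}; you do not need a nested B\"uchi/co-B\"uchi fixpoint.
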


Lemma \ref{lemma:coBuchi1.5} is a direct consequence of the following slightly more precise lemma.

\begin{lemma}\label{lemma:coBuchi1.5bis}
Consider an \onehalf-player co-B\"uchi game with imperfect information. Assume that the player has an observation-based strategy that is positively winning.
Then she also has an ultimately knowledge-based memoryless strategy that is positively winning.
Moreover, both the strategy and the set of positively winning states can be computed in time $\mathcal{O}(2^{|\states|})$.\rm
\end{lemma}

\begin{proof}
Fix an arena $\arena=\langle \states,\actions,\ftrans\rangle$ together with a co-B\"uchi game $\game=(\arena,\requiv,s_0,F)$ (recall here that a play is winning if and only if it visits finitely often the set $F$).

In the sequel we will be interested in computing the set $\poswstates$ of those states $s\in S$ from which the player has an observation-based strategy that  is positively winning.

The proof is very similar to the one of Lemma~\ref{lemma:safety1.5bis}, and therefore we reuse the notations as well as intermediate results. Recall that to win a safety game with positive probability the player needs to reach a state from which she is sure not to visit a final state (those states were denoted $\wstates$). In order to reach such a state, the player plays randomly on some fixed number of initial moves. Then she bets that she reached some good state $s\in\wstates$ and plays as if it is the case and in such a way to surely win (actually the state $s$ is guessed from the very beginning and is part of the strategy). As this second stage can be done mimicking a knowledge-based memoryless strategy, this leads to an ultimately knowledge-based memoryless  strategy. 

An important point is that the state $s$ should be reached without visiting a final state in the meantime. If the game is equipped now with a co-B\"uchi objective, it no longer matter. This remark leads to the following definition.
Consider the following (increasing and bounded) sequence:
$$\begin{cases}
\hat{W}_0 = \wstates\\
\hat{W}_{i+1} = \hat{W}_{i} \cup \{s\in\states\mid \exists \action\in\actions \text{ and }t\in \hat{W}_i \text{ s.t. }\delta(s,\action)(t)>0\}\\
\end{cases}
$$
Let $\hat{W}$ be the limit of the sequence $(\hat{W}_{i})_{i\geq 0}$: it consists exactly of those states from which the player has a strategy ensuring her to reach, in at most $|\states|$ moves, a state in $\wstates$ with some positive probability. Also note that the corresponding strategy is the one that plays randomly (with equiprobability) any action in $\actions$. Hence the only difference between the set $\hat{W}$ and the set $W$ defined in the proof of Lemma~\ref{lemma:safety1.5bis} is that one allows to have final states in the fixpoint definition of $\hat{W}$.

The states in $\hat{W}$ are actually those from which the player can wins the co-B\"uchi game with a positive probability.

\begin{fact}\label{fact:poswinningstates1-2coBuchi}
The following equality holds: $\hat{W}=\poswstates$.
\end{fact}

\begin{proof}
The inclusion $\hat{W}\subseteq \poswstates$ is rather immediate: from some state $s\in \hat{W}$, the player should first play randomly on the first $i$ rounds (where $i$ is the smallest integer such that $t\in \hat{W}_i$ in the previously defined sequence) and then play as if it was in some state $t\in\wstates$ (where $t$ is some reachable state from $s$ in $i$ moves accordingly to the definition of the sequence $(\hat{W}_i)_{i\geq 0})$. This last step is done using an observation-based strategy that mimics the one coming with the construction of $\wknowledge{}$: the only trick here is that the player "reset" the knowledge to be $\{t\}$.

Consider now the converse inclusion: $\poswstates \subseteq \hat{W}$. Let $s$ be some state in $\poswstates$ and assume, by contradiction, that $s\notin \hat{W}$. In particular a play starting from $s$ will never go through a state in $\wstates$. We claim that playing accordingly to some knowledge-based strategy is almost-surely loosing for the player: indeed, as a consequence of the definition of $\wstates$ (and of $\wknowledge{}$), from a state $t\notin\wstates$, playing a knowledge-based strategy, the probability of visiting a final state in the next $2^{|\states|}$ moves is some $\epsilon>0$. Moreover such a play stays outside of $\wstates$ forever. Hence, for any $k>0$, using Borel-Cantelli Lemma, the probability of such a play to visit at most $k$ final states is $0$ hence implying that the probability of going finitely often through $F$ is $0$ too. This means that $s$ is surely loosing for the player and contradicts the initial hypothesis of $s\in \poswstates$. Hence, $s\in \hat{W}$ which concludes the proof.
\finpreuve{Fact~\ref{fact:poswinningstates1-2coBuchi}}
\end{proof}

Now one can conclude the proof of Lemma~\ref{lemma:coBuchi1.5}: the existence of the ultimately knowledge-based memoryless strategy follows from Fact~\ref{fact:poswinningstates1-2}. Effectivity as well as complexity comes from the constructive way of defining $\hat{W}$ (and other intermediate objects) by means of fixpoint computations.
\finpreuve{Lemma~\ref{lemma:coBuchi1.5}}
\end{proof}

\subsection{Proof of Theorem \ref{theo:buchi}}

We now turn to the proof of Theorem \ref{theo:buchi}. Again, the key idea is to prove that the strategy that plays randomly inside the almost-surely winning region is an almost-surely winning strategy. 

For the rest of this section, we fix a concurrent game with imperfect information $\game= (\arena,\requiv_\Evei,\requiv_\Adami,s_0,\objective)$ equipped with a B\"uchi objective $\objective$. We set $\arena = \tuple{\states,\actions_\Evei,\actions_\Adami,\ftrans,\fstates}$. We also consider $\game^\Ki=(\arena^\Ki,\requiv_{\Evei}^\Ki,\requiv_{\Adami}^\Ki,(s_0,\{s_0\}),\objective^\Ki)$ to be the corresponding knowledge game.

The proof follows the same line as the one to prove Theorem~\ref{theo:reachability}. The main idea is again to prove that the strategy that stay inside the configuration with an almost-surely winning knowledge is almost-surely winning.

Again, we let
\begin{multline*}
	\wknowledge{}=\{K\in 2^\states\mid \exists \varphi_\Evei \text{ knowledge-based strategy for \Eve s.t. } \varphi_\Evei \text{ is}  \\ 
	\text{almost-surely winning for \Eve in } \game^\Ki \text{ from any }(s,K) \text{ with } s\in K\}
\end{multline*}
be the set of equivalent classes (with respect to $\requiv_{\Evei}^\Ki$) made only by almost-surely winning states for \Eve (note here that we require that the almost-surely winning strategy is the same for all configurations with the same knowledge). 
For every knowledge $K\in\wknowledge{}$ we define 
\begin{multline*}
	\allow{K}=\{\action_\Evei\in\actions_\Evei
	\mid \forall s\in K, \forall \action_\Adami\in\actions_\Adami,\\
	 \ftrans^\Ki((s,K),\action_\Evei,\action_\Adami)((s',K))>0 \Rightarrow K'\in\wknowledge{}\}
\end{multline*}

We then get a result similar to Proposition \ref{proposition:allownonempty} (the proof is exactly the same as the one of Proposition \ref{proposition:allownonempty}).

\begin{proposition}\label{proposition:allownonemptyB}
For every knowledge $K_\Evei\in\wknowledge{\Evei}$, $\allow{K_{\Evei}}\neq \emptyset$.
\end{proposition}

Again, we define now a knowledge-based uniform memoryless strategy $\phi$ for \Eve on $K\in\wknowledge{}$ by letting 
$$\phi(K)(\action_\Evei)=
\begin{cases}
\dfrac{1}{|\allow{K}|} & \text{if }\action_\Evei\in \allow{K}\\
0 & \text{otherwise}
\end{cases}$$

The next proposition shows that $\phi$ is almost-surely winning for \Eve.

\begin{proposition}\label{prop:uniformstratiswinningB}
The strategy $\phi$ is almost-surely winning for \Eve from states whose \Eve's knowledge is in $\wknowledge{\Evei}$.
\end{proposition}

\begin{proof}

In order to prove that $\phi$ is almost-surely winning one needs to show that it is almost-surely winning against \emph{any} observation-based strategy $\phi_\Adami$ of \Adam, \emph{i.e.} $\proba{(s_0,\{s_0\})}{\phi}{\phi_\Adami}(\objective)=1$. Again, as in the reachability case, once $\phi$ is fixed, and as it is a knowledge-based memoryless strategy, it induces a \onehalf-player co-B\"uchi game denoted $\game_\phi$ and defined exactly as in the proof of Proposition~\ref{prop:uniformstratiswinning}.

As a strategy for \Adam in $\game$ (equivalently in $\game^\Ki$) can be seen as well as a strategy in game $\game_\phi$ and \emph{vice versa}, while preserving the value of the game (against $\phi$ in $\game$), one derives the following fact.

\begin{fact}Strategy $\phi$ is almost-surely winning in $\game^\Ki$ if and only if the player has no positively winning observation-based strategy in the co-Buchi $\game_\phi$.
\end{fact}

As $\game_\phi$ is a \onehalf-player co-B\"uchi concurrent game with imperfect information, one can use Lemma~\ref{lemma:coBuchi1.5bis} to conclude that, in order to prove Proposition~\ref{prop:uniformstratiswinning}, it suffices to prove that $\phi$ is winning against any \emph{finite-memory} observation-based strategy of \Adam in $\game^\Ki$.

\begin{fact}Strategy $\phi$ is almost-surely winning in $\game^\Ki$ if and only if it is almost-surely winning against any finite-memory observation-based strategy of \Adam.
\end{fact}

Fix such a finite memory strategy $\psi=(Move,Up,m_0)$ for \Adam (let $M$ be the finite memory used here). As in the proof of Proposition~\ref{prop:uniformstratiswinning} it leads to define a new game $\game^\Ki_{Up}$ (we keep here the same notations / definitions). Again, we may assume that $\psi$ is a memoryless knowledge-based strategy for \Adam and our goal is to prove that $\phi$ almost-surely wins against $\psi$ from any configuration in $\{(s,K,m_0) \mid K\in\wknowledge{} \text{ and }s\in K\}$. Actually, one will prove a slightly stronger result, namely that $\phi$ almost-surely wins against $\psi$ from any configuration in $\{(s,K,m) \mid m\in M,\ K\in\wknowledge{} \text{ and }s\in K\}$.

We will first define an increasing sequence of subsets of almost-surely winning positions for \Eve in $\game^\Ki_{Up}$ and later we will prove that its limit is the set of all positions with a knowledge in $\wknowledge{}$ and that $\phi$ is actually an almost-surely winning strategy from those positions.
 For some configuration $(s,K,m)$ and some action $\action_\Evei\in\actions_\Evei$ and some distribution of actions $d$ in $\distribution{\actions_\Adami}$, we define $\post{\sigma_\Evei}{d}{(s,K,m)}$ as the set of all possible next states when \Eve plays $\action_\Evei$ and \Adam picks an action according to $d$ from $(s,K,m)$:
\begin{multline*}
\post{\sigma_\Evei}{d}{(s,K,m)}=\{(s',K',m')\mid \exists \action_\Adami \text{ s.t. } d(\action_\Adami)>0 \\ 
\text{ and } \delta^\Ki((s,K,m),\action_\Evei,\action_\Adami)((s',K',m'))>0\}
\end{multline*}
 Consider the following increasing sequence $(\rank{i})_{i\geq 0}$:
$\rank{0}=(2^{F^\Ki}\cap\wknowledge{})\times M$ consists of trivially winning positions and
 \begin{multline*}
\rank{i+1} = \rank{i} \cup
 \{(s,K,m)\mid K\in \wknowledge{} \\ 
 \text{ and }\exists\ \action_\Evei\in\allow{K}\text{ s.t. } \post{\action_{\Evei}}{Move(m)}{s,K,m}\cap \rank{i}\neq \emptyset\}
\end{multline*}

Let us denote by $\ranketoile$ the limit of the sequence $(\rank{i})_{i\geq 0}$. We claim that $\ranketoile=\{(s,K,m)\mid K\in\wknowledge{}\}$ and that $\phi$ is an almost-surely winning strategy for \Eve from those positions when \Adam plays accordingly to $\psi$.

The inclusion $\ranketoile\subseteq \{(s,K,m)\mid K\in\wknowledge{}\}$ is forced by the definition of $\ranketoile$. 
The fact that $\phi$ is an almost-surely winning strategy for \Eve from positions in $\ranketoile$ when \Adam plays accordingly to $\psi$ is a simple consequence of how $(\rank{i})_{i\geq 0}$ is defined and of Borel-Cantelli Lemma. 
Indeed from any configuration in $\rank{i}$, there is a non null probability to reach a final state in the next $i$ moves while playing $\phi$ against $\psi$ and moreover the play surely stay inside $\ranketoile$ while playing $\phi$ against $\psi$: hence for any $k\geq 0$, the probability, that a play starting from $\ranketoile$, in which \Eve follows $\phi$ and \Adam follows $\psi$, visits at most $k$ time a final configuration is null. Therefore the probability of going infinitely often through a final state is $1$, meaning that $\phi$ is almost-surely winning in $\ranketoile$.

In order to prove the other inclusion, we let $X=\{(s,K,m)\mid K\in\wknowledge{}\}\setminus\ranketoile$ and assume by contradiction that $X\neq\emptyset$. By definition, for any element $(s,K,m)\in X$ we have that $\forall \action_\Evei\in\allow{K}$, $\post{\action_{\Evei}}{Move(m)}{s,K,m}\subseteq X$. This means in particular that following $\phi$ from such a configuration, and if \Adam plays accordingly to $\psi$, then \Eve surely looses as the play surely stay in $X$ and $X\cap 2^{F^\Ki}\times M=\emptyset$. Now we claim that the same holds if one replaces $\phi$ by any almost-surely winning strategy for \Eve, leading to a contradiction. Indeed consider an almost-surely winning strategy $\phi^\AS$ for \Eve. Then we have the following fact (whose proof is omitted as it is exactly the same as the one of Fact~\ref{property:stayinallow}).

\begin{fact}\label{property:stayinallowB} Let $\lambda$ be a partial play consisting only of configurations in $X$. Assume that, for some strategy $\psi'$ of \Adam, $\lambda$ is a possible partial play accordingly to both $\phi^\AS$ and $\psi'$ (more formally, $\proba{(s,K,m)}{\phi}{\psi'}(\cone{\lambda})>0$ where $(s,K,m)$ denotes the initial configuration of $\lambda$). Then for any action $\action_\Evei\in\actions_\Evei$, one has $\phi^\AS(\lambda)(\action_\Evei)>0$ if and only if $\action_\Evei\in\allow{K'}$ where $K'$ denotes \Eve's knowledge in the last configuration of $\lambda$.
\end{fact}

Now one is ready to conclude. Assume \Adam plays accordingly to $\psi$ and \Eve plays accordingly to some almost-surely winning strategy $\phi^\AS$. Then it follows from Fact \ref{property:stayinallowB} and definition of $X$ that a play starting in $X$ stays forever in $X$, hence never visits $F^\Ki\times M$ and contradicting the hypothesis that $\phi^\AS$ is almost-surely winning. Therefore $X=\emptyset$, which concludes the proof of 
Proposition~\ref{prop:uniformstratiswinningB}\qed
\end{proof}

\bigskip

Now one concludes the proof of Theorem~\ref{theo:buchi} exactly as for the proof of Theorem~\ref{theo:reachability}. The 2-ExpTime hardness lower bound follows from the fact that it already holds for reachbility objective (Theorem~\ref{theo:lowerbound}).

\section{Discussion}\label{section:discussion}

The main contribution of this paper is to prove that one can decide whether \Eve has an almost-surely winning strategy in a concurrent game with imperfect information equipped with a reachability objective or a B\"uchi objective. 

A natural question is whether this result holds for other objectives, in particular for co-B\"uchi objectives. In a recent work \cite{BBG08}, Baier \emph{et al.} 
established undecidability of the emptiness problem for probabilistic B\"uchi automata on infinite words. Such an automaton can be simulated by a \onehalf-player imperfect information game: the states of the game are the one of the automaton, they are all equivalent for the player, and therefore an observation based strategy is an infinite word. Hence a pure (\emph{i.e.} non-randomised) strategy in such a game coincide with an input word for the automaton. From this fact, Baier \emph{et al.} derived that it is undecidable whether, in a \onehalf-player co-B\"uchi game with imperfect information, \Eve has an almost-surely winning \emph{pure} strategy. 

One can also consider the stochastic-free version of this problem (an arena is \defin{deterministic} iff $\delta(q,\action_\Evei,\action_\Adami)(q')\in\{0,1\}$ for all $q,q',\action_\Evei,\action_\Adami$) and investigate whether one can decide if \Eve has an almost-surely winning strategy in a \emph{deterministic} game equipped with a co-B\"uchi objective. We believe that the \onehalf-player setting can be reduced to this new one, hence allowing to transfer undecidability results \cite{Flo}. An even weaker model to consider is the stochastic-free model in which \Adam has perfect information about the play \cite{CDHR07}. 

It may happen that \Eve has no almost-surely winning strategy while having a family $(\phi_\epsilon)_{0<\epsilon<1}$ of strategies such that $\phi_\epsilon$ ensures to win with probability at least $1-\epsilon$. Such a family is called \defin{limit-surely winning}. Deciding existence of such families is a very challenging problem: indeed, in many practical situations, it is satisfying enough if one can control the risk of failing. Even if those questions have been solved for perfect information games \cite{dAH00}, as far as we know, there has not been yet any result obtained in the imperfect information setting.

Even if the algorithms provided in this paper are "optimal", they are rather naive (checking all strategies for \Eve may cost a lot in practice). Hence, one should look for fixpoint-based algorithms as the one studied in \cite{CDHR07}: it would be of great help for a symbolic implementation, and it could also be a useful step toward a solution of the problem of finding limit-surely winning strategies. Note that there are already efficient techniques and tools for finding \emph{sure} winning strategies in subclasses of concurrent games with imperfect information \cite{BCDHR08,BCDDH09}.

\medskip
\noindent\textbf{Acknowledgements. }The authors want to warmly thank Krishnendu Chatterjee and Laurent Doyen for pointing out an important mistake in a previous version of this work originally published in ICALP'09 \cite{GS09}. Indeed, in this previous work we additionally assumed that the players did not observe the actions they played and we proposed a variation of the constructions for this richer setting. It turned out that the constructions were wrong. We refer to \cite{CD11} for more insight on this point as well as for an answer to this more general problem.

\end{document}